\documentclass[11pt]{article}
\usepackage{verbatim,amsmath,eufrak,amsthm}
\usepackage{hyperref}
\usepackage{cite} 
\usepackage{yhmath}
\usepackage[toc,page]{appendix}

\hypersetup{
    bookmarks=true,         
    unicode=false,          
    pdftoolbar=true,        
    pdfmenubar=true,        
    pdffitwindow=false,     
    pdfstartview={FitH},    
    pdftitle={My title},    
    pdfauthor={Author},     
    pdfsubject={Subject},   
    pdfcreator={Creator},   
    pdfproducer={Producer}, 
    pdfkeywords={keywords}, 
    pdfnewwindow=true,      
    colorlinks=false,       
    linkcolor=red,          
    citecolor=green,        
    filecolor=magenta,      
    urlcolor=cyan           
}

\renewcommand{\arraystretch}{1.2}
\makeatletter
\newdimen\normalarrayskip              
\newdimen\minarrayskip                 
\normalarrayskip\baselineskip
\minarrayskip\jot
\newif\ifold             \oldtrue            \def\new{\oldfalse}
\def\arraymode{\ifold\relax\else\displaystyle\fi} 
\def\eqnumphantom{\phantom{(\theequation)}}     
\def\@arrayskip{\ifold\baselineskip\z@\lineskip\z@
     \else
     \baselineskip\minarrayskip\lineskip2\minarrayskip\fi}
\def\@arrayclassz{\ifcase \@lastchclass \@acolampacol \or
\@ampacol \or \or \or \@addamp \or
   \@acolampacol \or \@firstampfalse \@acol \fi
\edef\@preamble{\@preamble
  \ifcase \@chnum
     \hfil$\relax\arraymode\@sharp$\hfil
     \or $\relax\arraymode\@sharp$\hfil
     \or \hfil$\relax\arraymode\@sharp$\fi}}
\def\@array[#1]#2{\setbox\@arstrutbox=\hbox{\vrule
     height\arraystretch \ht\strutbox
     depth\arraystretch \dp\strutbox
     width\z@}\@mkpream{#2}\edef\@preamble{\halign
\noexpand\@halignto
\bgroup \tabskip\z@ \@arstrut \@preamble \tabskip\z@ \cr}%
\let\@startpbox\@@startpbox \let\@endpbox\@@endpbox
  \if #1t\vtop \else \if#1b\vbox \else \vcenter \fi\fi
  \bgroup \let\par\relax
  \let\@sharp##\let\protect\relax
  \@arrayskip\@preamble}
%
%
%
%
\def\eqnarray{\stepcounter{equation}%
              \let\@currentlabel=\theequation
              \global\@eqnswtrue
              \global\@eqcnt\z@
              \tabskip\@centering
              \let\\=\@eqncr
 \halign to \displaywidth\bgroup
    \eqnumphantom\@eqnsel\hskip\@centering
    $\displaystyle \tabskip\z@ {##}$%
    \global\@eqcnt\@ne \hskip 2\arraycolsep
         $\displaystyle\arraymode{##}$\hfil
    \global\@eqcnt\tw@ \hskip 2\arraycolsep
         $\displaystyle\tabskip\z@{##}$\hfil
         \tabskip\@centering
    &{##}\tabskip\z@\cr}
\begingroup\ifx\undefined\newsymbol \else\def\input#1 {\endgroup}\fi


\newcounter{app}

\def\app{\setcounter{equation}{0}
\def\theequation{A\Roman{app}.\arabic{equation}}\par
   \addvspace{4ex}
   \@afterindentfalse
  \secdef\@app\@dapp}
\newcommand\@app{\@startsection {app}{1}{0ex}%
                                   {-3.5ex \@plus -1ex \@minus -.2ex}%
                                   {2.3ex \@plus.2ex}%
                                   {\normalfont\Large\bf}}

\def\@dapp#1{%
{\parindent \z@ \raggedright  \bf #1}\par\nobreak}
\def\l@app#1#2{\ifnum \c@tocdepth >\z@
    \addpenalty\@secpenalty
    \addvspace{1.0em \@plus\p@}%
    \setlength\@tempdima{8.5em}%
    \begingroup
      \parindent \z@ \rightskip \@pnumwidth
      \parfillskip -\@pnumwidth
      \leavevmode \bfseries
      \advance\leftskip\@tempdima
      \hskip -\leftskip
      #1\nobreak\hfil \nobreak\hb@xt@\@pnumwidth{\hss #2}\par
    \endgroup\fi}
\newcounter{sapp}[app]

\def\sapp{\def\theequation{A\arabic{app}.\arabic{equation}}\par
   \@afterindentfalse
  \secdef\@sapp\@dsapp}
\newcommand\@sapp{\@startsection{sapp}{2}{\z@}%
                                     {-3.25ex\@plus -1ex \@minus -.2ex}%
                                     {1.5ex \@plus .2ex}%
                                     {\normalfont\large\bfseries}}

\def\@dsapp#1{%
{\parindent \z@ \raggedright  \bf #1}\par\nobreak}
\newcommand{\l@sapp}{\@dottedtocline{2}{1.5em}{3em}}
\def\draft{\oddsidemargin -.5truein
        \def\@oddfoot{\sl preliminary draft \hfil
        \rm\thepage\hfil\sl\today\quad\militarytime}
        \let\@evenfoot\@oddfoot \overfullrule 3pt
        \let\label=\draftlabel
        \let\marginnote=\draftmarginnote
   \def\@eqnnum{(\theequation)\rlap{\kern\marginparsep\tt\@eqnlabel}%
\global\let\@eqnlabel\@vacuum}  }



%

\def\be{\begin{eqnarray}}
\def\ee{\end{eqnarray}}

\def\p{\partial}

\def\beq{\begin{equation}}
\def\eeq{\end{equation}}
\def\ba{\beq\new\begin{array}{c}}
\def\ea{\end{array}\eeq}
\def\be{\ba}
\def\ee{\ea}

\def\Tr{{\rm Tr}\,}

\def\diag{{\rm diag}\,}

\newfont{\Bbbb}{msbm7 scaled 1\@ptsize00}

\newcommand{\z}{\raise-1pt\hbox{$\mbox{\Bbbb Z}$}}
\def\normordboson{ {\scriptstyle {{*}\atop{*}}} }

\def\normord{ {\scriptstyle {{\bullet}\atop{\bullet}}} }

\newfont{\alef}{msbm10 at 11pt}
\newfont {\goth}{eufm10 at 11pt}
\def\mathbb#1{\hbox{{\alef #1}}}

\let\@@savethanks\thanks
\def\thanks#1{\gdef\thefootnote{\alph{footnote}}\@@savethanks{#1}}

\unitlength=1.5pt
\baselineskip20pt

\newtheorem{theorem}{Theorem}[section]
\newtheorem{lemma}[theorem]{Lemma}
\newtheorem{proposition}[theorem]{Proposition}

\newtheorem{conjecture}[theorem]{Conjecture}



\hoffset=-1in
\oddsidemargin=25mm
\topmargin 0mm
\textheight 210mm
\textwidth 160mm
\makeatletter
\g@addto@macro \normalsize {%
 \setlength\abovedisplayskip{14pt plus 3pt minus 3pt}%
 \setlength\belowdisplayskip{14pt plus 3pt minus 3pt}%
  \setlength\abovedisplayshortskip{11pt plus 3pt minus 3pt}%
 \setlength\belowdisplayshortskip{11pt plus 3pt minus 3pt}%
}
\makeatother





\bigskip
\bigskip


\title{
\bigskip
{\bf
Cut-and-join description of generalized Brezin--Gross--Witten model} \vspace{.5cm}}
\author{{\bf A. Alexandrov}\thanks{E-mail:  {\tt alexandrovsash at gmail.com}}
\date{ } \\
{\small {\it 
Center for Geometry and Physics, Institute for Basic Science (IBS), Pohang 37673, Korea
  \&}}\\
{\small {\it 
CRM,
Universit\'e de Montr\'eal, Montr\'eal,  Canada  \&}}\\
{\small {\it 
ITEP, Moscow, Russia}}\\
}

\begin{document}

\setcounter{footnote}{0}

\setcounter{tocdepth}{3}

\maketitle

\vspace{-8.0cm}

\begin{center}
\hfill ITEP/TH-18/16
\end{center}

\vspace{6.5cm}
\begin{abstract} 
We investigate the Brezin--Gross--Witten model, a tau-function of the KdV hierarchy, and its natural one-parameter deformation, the generalized Brezin--Gross--Witten tau-function. In particular, we derive the Virasoro constraints, which completely specify the partition function. We solve them in terms of the cut-and-join operator. The Virasoro constraints lead to the loop equations,  which we solve in terms of the correlation functions. Explicit expressions for the coefficients of the tau-function and the free energy are derived, and a compact formula for the genus zero contribution is conjectured. A family of polynomial solutions of the KdV hierarchy, given by the Schur functions, is obtained for the half-integer values of the parameter. The quantum spectral curve and its classical limit are discussed. 
\end{abstract}
\bigskip

{Keywords: matrix models, tau-functions, KP hierarchy, Virasoro constraints, cut-and-join operator, enumerative geometry, }\\

\bigskip

\newpage

\tableofcontents

\def\thefootnote{\arabic{footnote}}
\section{Introduction}
\setcounter{equation}{0}

The Brezin--Gross--Witten (BGW) model 
\be
Z_{BGW}=\int \left[d U\right] e^{\frac{1}{\hbar}\Tr ( A^\dagger U+A U^\dagger)}
\ee
was introduced in the lattice gauge theory over 35 years ago \cite{Brezin, GW}. Later it was shown that in the weak coupling phase this model satisfies the Virasoro constraints \cite{GN}. Moreover, it is a tau-function of the KdV integrable hierarchy and can be described by the generalized Kontsevich model \cite{MMS}. 

This makes the BGW model interesting and, in many respects, similar to the Kontsevich-Witten tau-function \cite{Konts,Witten} -- one of the most important and beautiful ingredients of the modern mathematical physics. However, unlike the Kontsevich--Witten (KW) tau-function, which generates the intersection numbers of the moduli spaces or Riemann surfaces, and many other matrix models, for which enumerative geometry/combinatorics interpretation is known, similar interpretation of the BGW tau-function is still not available. Using the generalized Kontsevich model description of this tau-function, one can try to identify it with the generating function of the $r$-spin intersection numbers for $r=-2$. However, corresponding geometrical construction is not available yet, thus, it is impossible to compare the intersection numbers with the correlation functions of the matrix model. 

In spite of this absence of geometrical interpretation, the BGW tau-function is known to play (similarly to the KW tau-function) an important role in the topological recursion/Givental decomposition \cite{Eyn1,Eyn2,Eyn3,Giv1,Giv2}. Namely, it appears in decomposition of the complex matrix model \cite{IMMM1,IMMM2,IMMM3} and, in general, corresponds to the hard walls (see \cite{ChekhovHW} and references therein). 

Recently, it was shown that a natural one parametric deformation of the KW tau-function, called the Kontsevich--Penner model, describes open intersection numbers\cite{Open1,Open2, Open0}, a new and extremely interesting set of enumerative geometry invariants, which was introduced in \cite{PST,Buryak}. The matrix integral description allows us to show that their generating function is a tau-function of the modified KP (MKP) hierarchy, and to construct a full family of the Virasoro and W-constraints. This model possess a number of nice properties and, arguably, is even more  beautiful and natural then any of its specifications (in particular, the KW tau-function). 

Thus, to find a natural interpretation of the BGW tau-function one can try to consider its deformation, analogous to the Kontsevich--Penner deformation of the KW tau-function. It is easy to construct this deformation using the generalized Kontsevich model representation. In this representation it corresponds to the logarithmic deformation of the potential. This deformed model was introduced in \cite{MMS} and is given by the matrix integral 
\be\label{intrf}
\tau_N\sim \displaystyle{\int \left[d \Phi\right] \exp\left( \Tr \left(\frac{\Lambda^2\Phi}{\hbar}+\frac{1}{\hbar\Phi} +(N-M) \log \Phi\right)\right)}.
\ee
From the general properties of the generalized Kontsevich model (GKM) \cite{GKMToda} it follows that it is a tau-function of the MKP hierarchy with discrete time $N$. However, other properties of this model have not been investigated in detail so far. In particular, the Virasoro constraints were not known. The main goal of this paper is to fill this gap and to describe the generalized BGW model (\ref{intrf}) and its interesting specifications, in particular the original BGW model. 

We show that the tau-function (\ref{intrf}) is well definite for any complex (not necessarily integer!) value of $N$. Moreover, for any given value of $N$ this is a tau-function of the KdV hierarchy. We describe the Kac--Schwarz algebra for this tau-function and derive the Virasoro constraints. Here the difference with the Kontsevich--Penner model is quite transparent: to describe the Kontsevich--Penner model one should introduce higher W-constraints, while the partition function of the generalized BGW model is completely fixed by the Virasoro constraints. Moreover, only the first of them (the string equation) depends on $N$, thus, on the level of linear constraints, the case of general $N$ is almost as simple as the case with $N=0$.

Often the Virasoro and W-constraints can be solved in terms of the cut-and-join operator. Corresponding method  was introduced in \cite{MSh} for the Gaussian branch of the Hermitian matrix model and later has been applied to the KW tau-function \cite{Kontscaj} and to the Kontsevich--Penner model \cite{Open1,Open2}. We solve the Virasoro constraints for the BGW and generalized BGW tau-functions in terms of the cut-and-join operator:
\be
\tau_N=e^{\hbar \widehat{W}_N}\cdot 1,
\ee
where
\be
\widehat{W}_{N}=\frac{1}{2}\sum_{k,m=0}^\infty (2k+1)(2m+1)t_{2k+1}t_{2m+1}\frac{\p}{\p t_{2k+2m+1}}\\
+\frac{1}{4}\sum_{k,m=0}^\infty(2k+2m+3)t_{2k+2m+3}\frac{\p^2}{\p t_{2k+1}\p t_{2m+1}}+\left(\frac{1}{16}-\frac{N^2}{4} \right)t_1.
\ee

Using this operator we derive the coefficients of expansion of the tau-function and free energy. Here we see that the case of generalized BGW tau-function is much more interesting comparing to the original BGW tau-function. In particular, while for the BGW tau-function the genus zero contribution to the free energy is equal to zero (and higher genera contributions are rational functions of only finite number of times), for general $N$ this is not the case. Namely, for any genus the free energy is a non-trivial function of all times. The results of computations  allow us to conjecture a compact expression for the genus zero free energy of the generalized BGW tau-function. 

We also derive an equation for the quantum spectral curve of the generalized BGW tau-function,
\be
\left(\hbar^2 x^2\frac{\p^2}{\p x^2}+\hbar^2 x\frac{\p}{\p x}-x-\frac{S^2}{4}\right) \Psi_S(x)=0,
\ee
where $S=\hbar^{-1}N$. As for other KP/Toda tau-functions, which describe the enumerative geometry invariants, the equation for the quantum spectral curve, up to a conjugation, coincides with one of the Kac--Schwarz operators\cite{Aenum,ALS,Open1}. In the classical limit we get a genus zero spectral curve with one branch point.

The Virasoro constraints allow us to derive the loop equations and to solve them recursively. The correlation functions are defined on the spectral curve and they are symmetric polynomials in the inverse global coordinate. Thus, corresponding differentials are meromorphic with poles only at the branch point.

For the half-integer values of the parameter $N$, the generalized BGW tau-function is a polynomial in times. More specifically, it is given by the Schur functions of the dilaton shifted times, labelled by the triangular partitions. We describe this family of the KdV tau-functions (which constitute an infinite MKP tau-function) in detail.

All this allows us to conclude that, as in the case of the Kontsevich--Penner model, the deformed model appears to be more beautiful and natural then the original one. Unfortunately, a unitary integral representation of this deformed model is not known, and we do not expect that this model is directly related to the original lattice gauge models. However, some of our results (in particular, the cut-and-join representation) should be useful for the original BGW model. Moreover, from the Virasoro constraints derived in Section \ref{Secvir} it follows that the generalized BGW model describes a model of the open-closed string theory involving gravity\cite{openstring}, which can be obtained from the unitary matrix model in a double scaling limit \cite{dsl}.

The present paper is organized as follows. In Section \ref{S1} we consider the original BGW model and, basically following \cite{MMS}, describe it in terms of the GKM.  Section \ref{GBGW} is devoted to the generalized BGW tau-function.  In the Appendices we present explicit expressions for expansion of the tau-function and free energy of BGW and generalized BGW tau-functions.

\section{Brezin--Gross--Witten model}\label{S1}

The partition function of the BGW model \cite{GW,Brezin} is given by an $M\times M$ unitary matrix integral
\be\label{unitm}
Z_{BGW}=\int \left[d U\right] e^{\frac{1}{\hbar}\Tr ( A^\dagger U+A U^\dagger)}.
\ee
Here the Haar measure on the unitary group $U(M)$ is normalised by $\int \left[d U\right]=1$ and the parameter $\hbar$ describes the topological expansion (see below). Naively, (\ref{unitm}) depends on two external matrices, $A$ and $A^\dagger$, but actually it depends only on their product, more precisely on the square root of it
\be
\Lambda:=\left(A^\dagger A\right)^{\frac{1}{2}}.
\ee
The behaviour of this matrix model is essentially different at large and small values of $\hbar^{-1}\Tr \Lambda^{-1}$ and there is a phase transition between these two regimes \cite{GW,Brezin,Wadia}. In this paper we consider only the so-called Kontsevich (weak coupling) phase, which corresponds to the large values of the eigenvalues of the matrix $\Lambda$. Below for simplicity we assume that the matrix $\Lambda$ is diagonal
\be
\Lambda=\diag(\lambda_1,\dots,\lambda_M).
\ee

\subsection{Description in terms of generalized Kontsevich model}

As many other important matrix models, the BGW model can be described in terms of the generalized Kontsevich model \cite{GKMToda}. Namely, as it was shown by  A.~Mironov, A.~Morozov and G.~W.~Semenoff in \cite{MMS},  
\be\label{BGWGKM}
Z_{BGW}=\frac{ \displaystyle{\int \left[d \Phi\right] \exp\left( \Tr \left(\frac{\Lambda^2\Phi}{\hbar}+\frac{1}{\hbar\Phi} -M \log \Phi\right)\right)}}{ \displaystyle{\int \left[d \Phi\right] \exp\left( \Tr \left( \frac{1}{\hbar\Phi}-M \log \Phi\right)\right)}}.
\ee
In this section we basically follow the approach of \cite{MMS}. 

Actually, (\ref{BGWGKM}) as well as (\ref{unitm}) depends only on the ratio $\Lambda/\hbar$, thus it is convenient to introduce
\be
\tilde{\Lambda}:=\frac{\Lambda}{\hbar}=\diag(\tilde{\lambda}_1,\dots,\tilde{\lambda}_M),
\ee
and $\tilde{\lambda}_i=\lambda_i/\hbar$.

In (\ref{BGWGKM}) we integrate over $M\times M$ normal matrices, that is diagonolizable matrices
\be
\Phi=U\, \diag(\phi_1,\dots,\phi_M) \,U^{\dagger},\,\,\,\,\,\,\,\,\,\, \phi_i \in \gamma,
\ee
where $U$ is unitary and the contour $\gamma$ runs from $-\infty$ to a small circle enclosing zero, and then returning to $-\infty$. Then the measure of integration can be expressed in terms of $U$ and $\phi_i$'s in the standard way
\be
\left[d \Phi\right] =\Delta (\phi)^2 \left[d U\right]\prod_{i=1}^M d \phi_i,
\ee 
where
\be
\Delta(\phi)=\prod_{i<j}(\phi_j-\phi_i)
\ee
is the Vandermonde determinant.

After integration over the unitary matrix $U$ with the help of the HCIZ formula, (\ref{BGWGKM}) reduces to
\be\label{ZBGW}
Z_{BGW}=(-1)^\frac{M(M-1)}{2}\, \prod_{j=1}^{M}(j-1)! \, \frac{\det_{i,j=1}^M\left(\tilde{\lambda}_j^{M-i}\,I_{M-i}(2\tilde{\lambda}_j)\right)}{\Delta(\tilde{\lambda}^2)}.
\ee
Here
\be\label{mbf}
I_\nu(x)=\left(\frac{2}{x}\right)^\nu \frac{1}{2\pi i} \int_{\gamma} e^{\frac{x^2 \phi}{4}+\frac{1}{\phi}} \frac{d\phi}{\phi^{\nu+1}}
\ee
is the modified Bessel function and the normalization of (\ref{ZBGW}) can be easily found from its small $x$ expansion
\be
I_\nu(x)=\frac{1}{\Gamma(\nu+1)}\left(\frac{x}{2}\right)^\nu(1+O(x)).
\ee

From this eigenvalue integral representation it immediately follows that in the Kontsevich phase
\be\label{taufun}
\tau_{BGW}(\Lambda)={\mathcal C}_{BGW}^{-1}\,Z_{BGW},
\ee
where
\be
{\mathcal C}_{BGW}=\frac{e^{2\Tr \tilde{\Lambda}}\prod_{i=1}^{M}\left(j-1\right)!}{(2\pi)^{\frac{M}{2}}\det\left(\tilde{\Lambda}\otimes 1+1\otimes \tilde{\Lambda}\right)^\frac{1}{2}},
\ee
is a tau-function of the KP hierarchy. Indeed,
\be\label{BGWdet}
\tau_{BGW}(\Lambda)=\frac{\det_{i,j=1}^M{\Phi_j(\lambda_i)}}{\Delta(\lambda)},
\ee
which defines a tau-function in the Miwa parametrization 
\be\label{Miwa}
t_k=\frac{1}{k}\Tr \Lambda^{-k}.
\ee
Here $\Phi_j$'s are the so called basis vectors, which can be expressed in terms of the modified Bessel functions (\ref{mbf}),
\be\label{Phi0}
\Phi_j(\lambda)=\sqrt{4\pi \tilde{\lambda}}\, \lambda^{j-1} e^{-2\tilde{\lambda}} I_{j-1}(2\tilde{\lambda})\\
= \frac{\sqrt{4\pi \tilde{\lambda}}}{2\pi i} \hbar^{j-1} e^{-2\tilde{\lambda}} \int_{\gamma} e^{\tilde{\lambda}^2 t+\frac{1}{t}} \frac{dt}{t^{j}}.
\ee
We consider only the asymptotic expansion of the modified Bessel function for large values of $\lambda$ (we assume that $\arg \lambda \neq \pi$)
\be\label{Phi1ex}
\Phi_j(\lambda)=\lambda^{j-1}\left(1+\sum_{k=1}^\infty\frac{(-\hbar)^k}{\lambda^k}\frac{a_k(j)}{16^k\, k!}\right),
\ee
where
\be\label{ak}
a_k(j)=(4(j-1)^2-1^2)(4(j-1)^2-3^2)\dots(4(j-1)^2-(2k-1)^2),
\ee
thus, $\Phi_j(\lambda)$'s are of the form 
\be\label{Phias}
\Phi_j(\lambda)=\lambda^{j-1}(1+O(\lambda^{-1})).
\ee
This guarantees that 
\be\label{norm}
\tau_{BGW}(\Lambda)=1+O(\lambda_j^{-1}).
\ee

Vectors (\ref{Phi0}) are defined for all $j\in \mathbb Z$. Vectors for $j\geq 1$ define a point of the big cell of the Sato Grassmannian \cite{Sato, Segal,Mulase}\footnote{In this paper we consider only the index (or charge) zero sector of the Sato Grassmannian, thus all 
points corresponding to the different values of the discrete time are described in the same space. Equivalent description should include a flag of the Sato Grassmannians with different indices.}
\be\label{BGWpoint}
{\mathcal W}_{BGW}=\left<\Phi_1,\Phi_2,\Phi_3,\dots\right>.
\ee
Any such point corresponds to a tau-function of the KP hierarchy, which is a formal series in the times $t_k$ and solves the bilinear identity
\begin{equation}
\oint_{{\infty}} e^{\xi ({\bf t}-{\bf t'},z)}
\,\tau ({\bf t}-[z^{-1}],\hbar)\,\tau ({\bf t'}+[z^{-1}],\hbar)dz =0.
\end{equation}
Here $\xi({\bf t},z)=\sum_{k=1}^\infty t_k z^{k}$ and we use the standard notation
\be\label{shiftedt}
{\bf t} \pm \left[z^{-1}\right]=\left\{t_1\pm\frac{1}{z},t_2\pm\frac{1}{2z^2},t_3\pm\frac{1}{3z^3},\dots\right\}.
\ee
Thus, the BGW tau-function
\be
\tau_{BGW}({\bf t},\hbar)
\ee
 is defined by the point
(\ref{BGWpoint}), or equivalently, it can be considered as a limit of the ration of determianants (\ref{BGWdet}) as the size of the matrices $M$ tends to infinity. In this limit all the Miwa variables (\ref{Miwa}) are independent. 

In the Sato Grassmannian description the first basis vector plays a special role. It is related to the tau-function by 
\be\label{BA}
\Phi_1(\lambda)=\tau ([\lambda^{-1}],\hbar),
\ee
and is equal to the dual Baker--Akhiezer function at ${\bf t}=0$.

It is clear that the parameter $\hbar$ is not independent and can be removed by the time variables rescaling
\be\label{Homo}
\tau_{BGW}({\bf t},\hbar)=\tau_{BGW}({\bf t},1 )\Big|_{t_k=\hbar^k t_k}.
\ee
Let us stress that the expansion of $\tau_{BGW}({\bf t},\hbar)$ in $\hbar$ is not the genus expansion, but the topological expansion. More concretely,
\be
\tau_{BGW}({\bf t},\hbar) =\exp\left(\sum_{g=0}^\infty\sum_{n=1}^\infty \hbar^{-\chi}{\mathcal F}_{g,n}({\bf t})\right),
\ee
where $\chi=2-2g-n$ can be considered as the Euler characteristic. Here ${\mathcal F}_{g,n}({\bf t})$ is a genus $g$ contribution to free energy, which is a homogeneous polynomial in times $t_k$ of degree $n$,
\be
\sum_{k=0}^\infty t_{k}\frac{\p}{\p t_{k}}{\mathcal F}_{g,n}({\bf t})=n\,{\mathcal F}_{g,n}({\bf t}).
\ee
To get the genus expansion, one should multiply the times  by $\hbar^{-1}$:
\be\label{genusex}
\tau_{BGW}(\hbar^{-1}{\bf t},\hbar) =\exp\left(\sum_{g=0}^\infty \hbar^{2g-2}{\mathcal F}_{g}({\bf t})\right).
\ee
${\mathcal F}_{g}({\bf t})$ is the genus $g$ contribution to the free energy and
\be
{\mathcal F}_{g}({\bf t})=\sum_{n=1}^\infty {\mathcal F}_{g,n}({\bf t}).
\ee
It is known \cite{GN1,IMMM3} that
\be
{\mathcal F}_{0}=0,\\
{\mathcal F}_{1}=-\frac{1}{8}\log\left(1-\frac{t_1}{2}\right),
\ee
and for $g>1$ all ${\mathcal F}_g$ are polynomials in the variables
\be
T_k=\frac{t_k}{(2-t_1)^k}.
\ee
Variables $T_k$ are the ``moment variables" and expressions for ${\mathcal F}_{k}({\bf T})$ for small $k$ were obtained in \cite{GN1,IMMM3}.
With the help of the cut-and-join description of Section \ref{CAJs} we are able to find expressions for  ${\mathcal F}_{g}({\bf T})$ for $g\leq30$. See Appendix \ref{A} for the expressions of ${\mathcal F}_{g}({\bf T})$ for $g\leq9$.

\subsection{KdV hierarchy and Virasoro constraints}\label{KdVsec}

It is well-known that the tau-function $\tau_{BGW}({\bf t},\hbar)$ does not depend on even times $t_{2k}$ \cite{GN1}. Thus, it is a tau-function of the 2-reduction of the KP hierarchy, which is the KdV hierarchy \cite{MMS}. 
Probably the simplest way to show it is to use the Sato Grassmannian description and the Kac--Schwarz  operators \cite{KS} as it was done in \cite{MMS}.

The Kac--Schwarz (KS) operators \cite{KS,Fukuma,Aenum,GKM,AdlervM,MMS} are the differential operators in one variable which stabilize the point of the Sato Grassmannian for a given tau-function. For any tau-function the corresponding KS operators constitute an algebra (a subalgebra in $w_{1+\infty}$). Thus, for any KS operator we can use a correspondence between the $w_{1+\infty}$ and $W_{1+\infty}$ algebras \cite{KS,Fukuma,AdlervM, Orlov} to construct an operator from $W_{1+\infty}$, which annihilates the tau-function. 

Let us consider the operators
\be\label{KSop}
a=\frac{\lambda}{2}\frac{\p}{\p \lambda}+\frac{\lambda}{\hbar}-\frac{1}{4},\\
b=\lambda^2,
\ee
satisfying the commutation relations
\be\label{abcom}
\left[a,b\right]=b.
\ee
Using the integral representation (\ref{Phi0}) of the basis vectors it is easy to show \cite{MMS} that
\be\label{KS0}
a\, \Phi_j= (j-1)\Phi_j+\frac{1}{\hbar}\Phi_{j+1},\\
b\,  \Phi_j =j \hbar \Phi_{j+1}+\Phi_{j+2},
\ee
thus operators $a$ and $b$ stabilize the point (\ref{BGWpoint}) of the Sato Grassmannian 
\be
a \,{\mathcal W}_{BGW} \subset  {\mathcal W}_{BGW},\\
b\, {\mathcal W}_{BGW} \subset  {\mathcal W}_{BGW},
\ee
and are the KS operators.

However, these two operators do not completely specify the point of the Sato Grassmannian and the tau-function. Thus, they do not generate the KS algebra. Let us find some other KS operators. Integration by parts yields
\be\label{almKS}
\frac{1}{b}a\, \Phi_j= \left(\frac{1}{2\lambda}\frac{\p}{\p \lambda}+\frac{1}{\hbar \lambda}-\frac{1}{4\lambda^2}\right)\Phi_j=\frac{1}{\hbar}\Phi_{j-1}.
\ee
The operator $\frac{1}{b}a$ is not a KS operator
\be
\frac{1}{b}a\, \Phi_1 =\frac{1}{\hbar}\Phi_0 \notin {\mathcal W}_{BGW}.
\ee
However, combining (\ref{almKS}) with (\ref{KS0}) one obtains
\be
\frac{1}{b}a^2 \, \Phi_j = \frac{1}{\hbar}(j-1)\Phi_{j-1}+\frac{1}{\hbar^2}\Phi_{j} 
\ee
and
\be\label{cexp}
c=\frac{1}{b}a^2=\frac{1}{4}\frac{\p^2}{\p \lambda^2}+\frac{1}{\hbar}\frac{\p}{\p \lambda}+ \frac{1}{\hbar^2}+\frac{1}{16\lambda^2} 
\ee
is the KS operator. To the best of our knowledge, this KS operator for the BGW tau-function has never been considered . Operators $a$, $b$ and $c$ satisfy the commutation relations
\be
\left[c,a\right]=c,\,\,\,\,\,\,\left[c,b\right]=2a+1,
\ee
and (\ref{abcom}).

\begin{proposition}
Operators $a$ and $c$ completely specify the point ${\mathcal W}_{BGW}$ of the Sato Grassmannian. 
\end{proposition}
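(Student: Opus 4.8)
The plan is to show that $ {\mathcal W}_{BGW} $ is the unique point of the big cell (the index-zero sector) of the Sato Grassmannian that is simultaneously stabilized by $a$ and $c$. The pair $a,b$ fails for a structural reason: by (\ref{KS0}) both $a$ and $b$ only \emph{raise} the index $j$, so from any admissible bottom vector they reconstruct the whole tower $\{\Phi_j\}_{j\ge1}$ but impose no condition on $\Phi_1$ itself, which stays free. The role of $c$ is to supply the one missing constraint that pins $\Phi_1$ down, after which $a$ regenerates everything above it.

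First I would extract that constraint. Let $W$ be any big-cell point with $cW\subseteq W$, and let $\phi_1 = 1 + O(\lambda^{-1})$ be its unique normalized element of minimal degree. From the explicit form (\ref{cexp}) the degree-preserving part of $c$ acts on $\phi_1$ as multiplication by $\hbar^{-2}$, while every remaining term ($\tfrac14\partial_\lambda^2$, $\hbar^{-1}\partial_\lambda$ and $\tfrac{1}{16}\lambda^{-2}$) strictly lowers the degree; hence $c\,\phi_1 - \hbar^{-2}\phi_1$ contains only strictly negative powers of $\lambda$. Since $c\,\phi_1\in W$ and $\phi_1\in W$, this difference lies in $W\cap\lambda^{-1}\mathbb{C}[[\lambda^{-1}]]$, which is $\{0\}$ by the defining property of the big cell. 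Therefore $c\,\phi_1 = \hbar^{-2}\phi_1$; equivalently $\phi_1$ solves the second-order ODE $\left(\tfrac14\partial_\lambda^2 + \hbar^{-1}\partial_\lambda + \tfrac{1}{16}\lambda^{-2}\right)\phi_1 = 0$.

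Next I would solve this ODE within normalized formal series $\phi_1 = \sum_{k\ge0}c_k\lambda^{-k}$ with $c_0=1$. Matching the coefficient of $\lambda^{-m}$ gives, for $m\ge2$, the recursion $\tfrac{m-1}{\hbar}\,c_{m-1} = \left(\tfrac{(m-2)(m-1)}{4} + \tfrac{1}{16}\right)c_{m-2}$. Because $\tfrac{m-1}{\hbar}\neq 0$ for all $m\ge2$, this determines every $c_k$ uniquely from $c_0=1$, so $\phi_1$ is forced to coincide with the BGW vector $\Phi_1$ of (\ref{Phi0})--(\ref{Phi1ex}); as a check it yields $c_1=\hbar/16$, consistent with $a_1(1)=-1$ in (\ref{ak}).

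Finally I would regenerate the tower: rewriting the first relation of (\ref{KS0}) as $\Phi_{j+1} = \hbar\left(a-(j-1)\right)\Phi_j$ and using $aW\subseteq W$ together with $\Phi_1=\phi_1\in W$, induction on $j$ gives $\Phi_j\in W$ for all $j\ge1$, hence ${\mathcal W}_{BGW}\subseteq W$. Since both are big-cell points, comparing their degree-$n$ reduced basis vectors shows the difference lies again in $W\cap\lambda^{-1}\mathbb{C}[[\lambda^{-1}]]=\{0\}$, forcing $W={\mathcal W}_{BGW}$. I expect the main obstacle to be the first step: the real content is recognizing that the only information $c$ adds beyond $a$ is the bottom-of-tower eigenvalue equation $c\,\Phi_1=\hbar^{-2}\Phi_1$, and justifying the discard of the negative-power remainder via the big-cell condition; once $\Phi_1$ is fixed, the remaining steps are routine.
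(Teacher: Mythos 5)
Your proof is correct and follows essentially the same route as the paper: the stabilization by $c$ forces the eigenvalue equation $c\,\Phi_1=\hbar^{-2}\Phi_1$ on the lowest basis vector, whose normalized formal-series solution is unique, and $a$ then regenerates the higher basis vectors. You merely fill in details the paper leaves implicit (the big-cell argument for discarding the negative-power remainder and the explicit coefficient recursion), which strengthens rather than changes the argument.
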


\begin{proof}
From (\ref{cexp}) we see that the operator $c$ acts as
\be
c\, \lambda^k=\frac{1}{\hbar^2}\lambda^k\left(1+O(\lambda^{-1})\right).
\ee 
Thus, if this is the KS operator for some point of the Sato Grassmannian, then the first basis vector should be the eigenfunction of this operator:
\be
c\, \Phi_1 =\frac{1}{\hbar^2} \Phi_1.
\ee
From this equation it immediately follows that the solution corresponds to the big cell of the Sato Grassmannian,
\be
\Phi_1=1+O(\lambda^{-1}),
\ee
and it is unique. All higher basis vectors can be generated from $\Phi_1$ by the operator $a$.
\end{proof}

From the correspondence between $w_{1+\infty}$ and its central extension $W_{1+\infty}$ 
it immediately follows that the KS operators $b^{k}$ and $b^{k}a$ correspond to the constraints
\be\label{KdVconst}
\frac{\p}{\p t_{2k}} \tau_{BGW}=\nu_k \, \tau_{BGW},\,\,\,\,\, k\geq 1,
\ee
and
\be\label{KPvir}
\left(\frac{1}{2}\widehat{L}_{2k}-\frac{1}{\hbar}\frac{\p}{\p t_{2k+1}}\right) \tau_{BGW} = \mu_k \, \tau_{BGW},\,\,\,\,k\geq0,
\ee
for some constants $\nu_k$ and $\mu_k$. Here
\be
\widehat{L}_m=\frac{1}{2} \sum_{a+b=-m}a b t_a t_b+ \sum_{k=1}^\infty k t_k \frac{\p}{\p t_{k+m}}+\frac{1}{2} \sum_{a+b=m} \frac{\p^2}{\p t_a \p t_b}
\ee
is an operator from the Virasoro subalgebra of the $W_{1+\infty}$ symmetry algebra of the KP hierarchy.

From the commutation relations between the operators in the l.h.s. of (\ref{KdVconst}) and (\ref{KPvir}) it follows that 
\be
\nu_k=\mu_k=0,\,\,\,\,\,\,\,\,\,\,\,\,\,\,\,k>0.
\ee
However, this argument does not allow us to find $\mu_0$. This fact corresponds to the observation that the KS operators $a$ and $b$ do not completely specify a point of the Sato Grassmannian. From the normalization condition (\ref{norm}) and the constraint (\ref{KPvir}) with $k=0$ it follows that this constant is proportional to the first derivative of the tau-function:
\be
\mu_0=-\frac{1}{\hbar}\frac{\p}{\p t_1} \tau_{BGW}\Big|_{{\bf t}=0}.
\ee
This derivative is equal to the coefficient in front of $\lambda^{-1}$ of the expansion (\ref{Phi1ex}) of $\Phi_1(\lambda)$,
\be
\Phi_1(\lambda)=1+\frac{\hbar}{16\lambda}+O(\lambda^{-2}),
\ee
thus
\be
\mu_0=-\frac{1}{16}.
\ee

Since the tau-function is independent of the even times, the Virasoro constraints (\ref{KdVconst}) can be represented as
\be\label{KdVvir}
\hbar \widehat{\mathcal L}_m\, \tau_{BGW}({\bf t},\hbar)= \frac{\p}{\p t_{2m+1}}\tau_{BGW}({\bf t},\hbar), \,\,\,\,\, m\geq 0,
\ee
where
\be
\widehat{\mathcal L}_m:= \frac{1}{2}\sum_{k=0}^\infty (2k+1) {t}_{2k+1} \frac{\p}{\p t_{2k+2m+1}}+\frac{1}{4} \sum_{a+b=m-1} \frac{\p^2}{\p t_{2a+1} \p t_{2b+1}}+\frac{1}{16}\delta_{m,0}.
\ee
These Virasoro constraints for the BGW tau-function were obtained already in \cite{GN}. 
Constraints (\ref{KdVvir}) have a unique solution with the normalisation (\ref{norm}). This solution will be constructed in the next section.

The KS operator $c$
corresponds to the $W_{1+\infty}$ operator
\be
\widehat{W}_c=\frac{1}{4}\widehat{M}_{-2}+\frac{1}{\hbar}\widehat{L}_{-1}-\frac{1}{8}t_2,
\ee
where
\begin{multline}
\widehat{M}_k=\frac{1}{3} \sum_{a+b+c=k} \normordboson \widehat{J}_a \widehat{J}_b \widehat{J}_c \normordboson=
\frac{1}{3}\sum_{a+b+c=-k}a\, b\, c\, t_a\, t_b\, t_c+\sum_{c-a-b=k}a\, b\, t_a\, t_b\, \frac{\p}{\p t_{c}}\\
+\sum_{b+c-a=k}a\, t_{a}\frac{\p^2}{\p t_b\p t_c}+\frac{1}{3}\sum_{a+b+c=k}\frac{\p^3}{\p t_a \p t_b \p t_c}
\end{multline}
are the cubic operators from the $W_{1+\infty}$ algebra. Thus, $\tau_{BGW}$ is the eigenfunction of the operator $\widehat{W}_c$ and, from the consideration of the corresponding linear constraint at the point $t_k=0$ for all $k$ we conclude that the eigenvalue is equal to zero:
\be\label{Wc}
\widehat{W}_c\, \tau_{BGW}=0.
\ee
This equation also allows us to find $\mu_0$. Indeed, from the KdV reduction condition (\ref{KdVconst}) it follows that (\ref{Wc}) is equivalent to
\be
\sum_{k=0}^\infty (2k+2)t_{2k+2} \left(\widehat{\mathcal L}_k-\frac{1}{\hbar}\frac{\p}{\p t_{2k+1}}\right)\tau_{BGW}=0.
\ee

\subsection{Cut-and-join operator}\label{CAJs}

Using the approach introduced in \cite{MSh} we solve the constraints (\ref{KdVvir}) and 
construct a simple recursion, which allows us to calculate the coefficients of the $\hbar$-expansion of the tau-function
\be\label{hbarexp}
\tau_{BGW}({\bf t},\hbar)=1+\sum_{k=1}^\infty \hbar^k \tau^{(k)}_{BGW}({\bf t}).
\ee
Namely, we introduce the Euler operator
\be
\widehat{D}:=\sum_{k=0}^\infty(2k+1)t_{2k+1}\frac{\p}{\p t_{2k+1}}.
\ee
Then, combining the Virasoro constraints (\ref{KdVvir}) we obtain
\be\label{Wequ}
\hbar \widehat{W}_{BGW}\,\tau_{BGW}=\widehat{D}\,\tau_{BGW},
\ee
where
\be\label{caj}
\widehat{W}_{BGW}=\sum_{k=0}^\infty (2k+1)t_{2k+1} \widehat{\mathcal L}_k\\
=\frac{1}{2}\sum_{k,m=0}^\infty (2k+1)(2m+1)t_{2k+1}t_{2m+1}\frac{\p}{\p t_{2k+2m+1}}\\
+\frac{1}{4}\sum_{k,m=0}^\infty(2k+2m+3)t_{2k+2m+3}\frac{\p^2}{\p t_{2k+1}\p t_{2m+1}}+\frac{t_1}{16}.
\ee
does not depend on $\hbar$. From (\ref{Homo}) it follows that
\be
\widehat{D} \, \tau^{(k)}_{BGW}= k\, \tau^{(k)}_{BGW}.
\ee
and after substitution of (\ref{hbarexp}) into (\ref{Wequ}) we get a recursion
\be\label{rec0}
 \tau^{(k+1)}_{BGW}=\frac{1}{k+1}\widehat{W}_{BGW}\, \tau^{(k)}_{BGW}.
\ee
Since $\tau_{BGW}^{(0)}=1$, we have
\be
\tau^{(k)}_{BGW}= \frac{\widehat{W}_{BGW}^k}{k!}\cdot 1.
\ee
Thus, we proved 
\begin{theorem}
\be
\tau_{BGW}= e^{\hbar \widehat{W}_{BGW}} \cdot 1
\ee
where the differential operator $\widehat{W}_{BGW}$ is given by (\ref{caj}).
\end{theorem}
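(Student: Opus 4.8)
The plan is to reduce the statement to a simple triangular recursion in powers of $\hbar$, following the cut-and-join method of \cite{MSh}. Since all the ingredients have already been assembled, the argument is essentially a matter of packaging the Virasoro constraints into a single homogeneous equation and then solving it order by order.

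First I would take the Virasoro constraints (\ref{KdVvir}), multiply the $m$-th constraint by $(2m+1)t_{2m+1}$, and sum over all $m\geq 0$. On the left this produces exactly $\hbar\sum_{m}(2m+1)t_{2m+1}\widehat{\mathcal L}_m = \hbar\,\widehat{W}_{BGW}$, while on the right it produces the Euler operator $\widehat{D}=\sum_{m}(2m+1)t_{2m+1}\,\partial/\partial t_{2m+1}$ applied to $\tau_{BGW}$. This yields the key identity (\ref{Wequ}), namely $\hbar\,\widehat{W}_{BGW}\tau_{BGW}=\widehat{D}\,\tau_{BGW}$, which is an $\hbar$-independent operator statement because $\widehat{W}_{BGW}$ carries no $\hbar$.

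Next I would exploit the homogeneity (\ref{Homo}): the $\hbar$-dependence enters only through the rescaling $t_k\mapsto \hbar^k t_k$, so the coefficient $\tau^{(k)}_{BGW}$ of $\hbar^k$ in the expansion (\ref{hbarexp}) is homogeneous of weight $k$ under the grading assigning weight $2j+1$ to $t_{2j+1}$; equivalently $\widehat{D}\,\tau^{(k)}_{BGW}=k\,\tau^{(k)}_{BGW}$. Substituting (\ref{hbarexp}) into the identity above and matching the coefficient of $\hbar^{k+1}$ then gives $(k+1)\,\tau^{(k+1)}_{BGW}=\widehat{W}_{BGW}\,\tau^{(k)}_{BGW}$, which is the recursion (\ref{rec0}). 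Starting from $\tau^{(0)}_{BGW}=1$, a one-line induction yields $\tau^{(k)}_{BGW}=\widehat{W}_{BGW}^{\,k}/k!\cdot 1$, and resumming the series produces $\tau_{BGW}=e^{\hbar\,\widehat{W}_{BGW}}\cdot 1$.

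There is no deep obstacle here; the substance of the result lives in the earlier derivation of the Virasoro constraints and in the uniqueness of their normalized solution. The one point demanding care is the compatibility of the two gradings: I must verify that the weight measured by $\widehat{D}$, with $t_{2j+1}$ of weight $2j+1$, coincides with the power of $\hbar$ selected by (\ref{Homo}), since only then does matching powers of $\hbar$ cleanly decouple (\ref{Wequ}) into the one-step recursion. Once that is confirmed, it is worth noting that the construction simultaneously reproves uniqueness: the recursion determines every $\tau^{(k)}_{BGW}$ from $\tau^{(0)}_{BGW}=1$, so the normalized solution of (\ref{KdVvir}) is forced to be exactly the claimed exponential.
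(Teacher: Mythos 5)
Your proposal is correct and follows essentially the same route as the paper: summing the Virasoro constraints against $(2m+1)t_{2m+1}$ to obtain $\hbar\,\widehat{W}_{BGW}\,\tau_{BGW}=\widehat{D}\,\tau_{BGW}$, using the homogeneity (\ref{Homo}) to get $\widehat{D}\,\tau^{(k)}_{BGW}=k\,\tau^{(k)}_{BGW}$, and then decoupling into the one-step recursion (\ref{rec0}) and resumming. The grading compatibility you flag is indeed the only point requiring care, and it holds because $\tau_{BGW}$ depends only on the odd times, so the Euler operator over $t_{2k+1}$ captures the full $\hbar$-grading.
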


With a few lines of Maple code the author was able to find all $\tau^{(k)}_{BGW}$ for $k\leq 90$. Let us stress that the obtained expressions allow us to find  explicitly all correlation functions $\omega_{g,n}$ for $g\leq 30$ and arbitrary $n$ (see below).

\section{Generalized Brezin--Gross--Witten model}\label{GBGW}

There exists a deformation of the BGW model, which depends on an additional parameter $N$ (not to be confused with $M$, the size of the matrices)
\be\label{gend}
Z_{N}(\Lambda)=\frac{ \displaystyle{\int \left[d \Phi\right] \exp\left( \Tr \left(\frac{\Lambda^2\Phi}{\hbar}+\frac{1}{\hbar\Phi} +(N-M) \log \Phi\right)\right)}}{ \displaystyle{\int \left[d \Phi\right] \exp\left( \Tr \left( \frac{1}{\hbar\Phi}+(N-M) \log \Phi\right)\right)}}.
\ee
For $N=0$ it obviously coincides with the BGW model (\ref{BGWGKM}), and for
$N\neq 0$ the unitary integral representation of (\ref{gend}) is not known.

This model was introduced in \cite{MMS}, and in the weak coupling limit (large $\tilde{\Lambda}$) it has very natural integrable properties.  Namely, from the general theory of GKM \cite{GKMToda}, it follows that after a multiplication by a simple quasi-classical prefactor it is a tau-function of the MKP hierarchy, where $N\in\mathbb Z$ is the discrete time. 

Following the description of the open intersection numbers in terms of the 
Kontsevich--Penner model, we do not require $N$ to be an integer. It appears that the model (\ref{gend}) is defined perfectly well for an arbitrary $N\in {\mathbb C}$. Moreover, the tau-functions corresponding to the half-integer values of $N$ are particularly interesting: they are polynomials. We call (\ref{gend}) the generalized Brezin--Gross--Witten model. In this section we consider the generalized BGW tau-function in detail.

\subsection{MKP hierarchy and Virasoro constraints}\label{Secvir}
After integration over the unitary group (\ref{gend}) reduces to
\be\label{Zdet}
Z_N(\Lambda)=(-1)^\frac{M(M-1)}{2}\,\det(\tilde{\Lambda})^{2N}\, \prod_{j=1}^{M}\Gamma(j-N)  \frac{\det_{i,j=1}^M\left(\tilde{\lambda}_j^{M-N-i}\,I_{M-N-i}(2\tilde{\lambda}_j)\right)}{\Delta(\tilde{\lambda}^2)},
\ee
which satisfies $Z_N(0)=1$.

From the general theory of GKM it follows that for the large values of the eigenvalues of $\Lambda$ matrix integral (\ref{gend}) corresponds to the MKP tau-function
\be\label{taufunG}
\tau_N={\mathcal C}_{N}^{-1}\,Z_{N},
\ee
where
\be\label{corm}
{\mathcal C}_{N}=\frac{e^{2\Tr \tilde{\Lambda}}\det \tilde{\Lambda}^N\,\prod_{i=1}^{M}\Gamma(j-N) }{(2\pi)^{\frac{M}{2}}\,\det\left(\tilde{\Lambda}\otimes 1+1\otimes \tilde{\Lambda}\right)^\frac{1}{2}}.
\ee
Indeed, from (\ref{Zdet}) and (\ref{corm}) we have
\be\label{BGWN}
\tau_N=\frac{\det_{i,j=1}^M{\left(\Phi^{(N)}_{j}(\lambda_i)\right)}}{\Delta(\lambda)},
\ee
where the basis vectors
\be\label{PhiN}
\Phi_j^{(N)}(\lambda):=\lambda^N \Phi_{j-N}(\lambda),
\ee
and $\Phi_j$'s were defined in (\ref{Phi0}).
The coefficients of their asymptotic series expansion for the large values of $|\lambda|$ depend only on $j-N$
\be\label{phiser}
\Phi^{(N)}_j(\lambda)=\lambda^{j-1}\left(1+\sum_{k=1}^\infty\frac{(-\hbar)^k}{\lambda^k}\frac{a_k(j-N)}{16^k\, k!}\right),
\ee
where $a_k(j)$ is a polynomial both in $k$ and $j$ given by (\ref{ak}).
These basis vectors define a point on the big cell of the Sato Grassmannian
\be
{\mathcal W}_N=\left<\Phi^{(N)}_1,\Phi^{(N)}_2,\Phi^{(N)}_3,\dots\right>.
\ea
The value $N=0$ corresponds to the original BGW model considered in Section \ref{S1}:
\be
\tau_0=\tau_{BGW}.
\ee
From (\ref{phiser}) it follows that
\be\label{HomoN}
\tau_{N}({\bf t},\hbar)=\tau_{N}({\bf t},1 )\Big|_{t_k=\hbar^k t_k}.
\ee

Let us stress that (\ref{phiser}) defines a point of the big cell of the Sato Grassmannian, thus, a KP tau-function for any $N\in\mathbb C$. Moreover, it defines an MKP hierarchy, which relates $\tau_N$ and $\tau_{N+n}$ for any $n \in \mathbb Z$, $N\in\mathbb C$.
The MKP hierarchy can be described by the bilinear identity, satisfied by the tau-function $\tau_N({\bf t},\hbar)$, namely, in our case, 
\begin{equation}\label{bi1}
\oint_{{\infty}} z^n e^{\xi ({\bf t}-{\bf t'},z)}
\,\tau_{N+n} ({\bf t}-[z^{-1}],\hbar)\,\tau_{N} ({\bf t'}+[z^{-1}],\hbar)dz =0,\,\,\,\,N\in {\mathbb C}, \,\,\,\,n \in {\mathbb N}_0.
\end{equation}
Here ${\mathbb N}_0=\left\{0,1,2,\dots\right\}$ is the set of non-negative integers.

Again, for all $N$ we have\footnote{This expression for the KS operators indicates that the generalized BGW tau-function is closely related to the model, considered in \cite{AdlerMor}.}
\be\label{KS}
a\, \Phi_j^{(N)}=\left(j-1-\frac{N}{2}\right)\Phi_j^{(N)}+\frac{1}{\hbar}\Phi_{j+1}^{(N)},\\
b\,  \Phi_j^{(N)} =(j-N) \hbar \Phi_{j+1}^{(N)}+\Phi_{j+2}^{(N)}.
\ee
Here the KS operators $a$ and $b$ are given by (\ref{KSop}) and do not depend on $N$. This means, in particular, that they can not uniquely specify the point of the Sato Grassmannian, because they stabilize all points ${\mathcal W}_N$.

Integration by parts yields
\be
\frac{1}{b}\left(a-\frac{N}{2}\right)  \Phi_j^{(N)}=\frac{1}{\hbar} \Phi_{j-1}^{(N)}.
\ee
Thus
\be
c_N=\frac{1}{b}\left(a^2-\frac{N^2}{4}\right)
\ee
is the KS operator for $\tau_N$:
\be\label{protoQSC}
c_N\, \Phi_j^{(N)}=\frac{1}{\hbar}(j-1)\Phi_{j-1}^{(N)}+\frac{1}{\hbar^2}\Phi_{j}^{(N)}
\ee
and
\be
c_N\, {\mathcal W}_{N}= {\mathcal W}_{N}.
\ee
It satisfies the commutation relations
\be
\left[c_N,a\right]=c_N,\,\,\,\,\,\,\left[c_N,b\right]=2a+1,
\ee
and, similar to the case $N=0$ considered in Section \ref{S1}, we have
\begin{proposition}
Operators $a$ and $c_N$ completely specify the point ${\mathcal W}_{N}$ of the Sato Grassmannian. 
\end{proposition}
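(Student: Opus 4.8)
The plan is to transcribe the proof of the $N=0$ Proposition, since the deformation enters only through a harmless shift and never touches the structure of the argument. The crucial observation is that $c_N$ has $\Phi_1^{(N)}$ as a distinguished eigenfunction: setting $j=1$ in (\ref{protoQSC}) makes the factor $(j-1)$ vanish, leaving
\be
c_N\,\Phi_1^{(N)}=\frac{1}{\hbar^2}\,\Phi_1^{(N)}.
\ee
First I would write $c_N$ out explicitly. Since $b=\lambda^2$, one has $c_N=c-\frac{N^2}{4\lambda^2}$, so from (\ref{cexp})
\be
c_N=\frac{1}{4}\frac{\p^2}{\p\lambda^2}+\frac{1}{\hbar}\frac{\p}{\p\lambda}+\frac{1}{\hbar^2}+\left(\frac{1}{16}-\frac{N^2}{4}\right)\frac{1}{\lambda^2}.
\ee
In particular $c_N\,\lambda^k=\frac{1}{\hbar^2}\lambda^k\bigl(1+O(\lambda^{-1})\bigr)$, exactly as in the undeformed case; the only effect of $N$ is on the $O(\lambda^{-2})$ coefficient.

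Next I would establish uniqueness of the eigenfunction inside the big cell. A formal series $\Psi=1+O(\lambda^{-1})$ solves $c_N\Psi=\hbar^{-2}\Psi$ precisely when
\be
\frac{1}{4}\Psi''+\frac{1}{\hbar}\Psi'+\left(\frac{1}{16}-\frac{N^2}{4}\right)\frac{1}{\lambda^2}\Psi=0.
\ee
Writing $\Psi=\sum_{n\geq0}c_n\lambda^{-n}$ with $c_0=1$ and reading off the coefficient of $\lambda^{-m}$, the term $\frac{1}{\hbar}\Psi'$ contributes $-\tfrac{m-1}{\hbar}c_{m-1}$ while the other two terms contribute a multiple of $c_{m-2}$. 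This yields a triangular recursion determining $c_{m-1}$ from $c_{m-2}$ for every $m\geq2$; since the coefficient $-(m-1)/\hbar$ multiplying the unknown never vanishes, the series is fixed uniquely by $c_0=1$. Hence $\Phi_1^{(N)}=1+O(\lambda^{-1})$ is the unique big-cell eigenfunction, and $N$ alters only the numerical coefficients of this recursion, not its solvability.

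Finally I would recover the whole point of the Grassmannian from $\Phi_1^{(N)}$ using $a$. Rearranging the first relation in (\ref{KS}) gives
\be
\Phi_{j+1}^{(N)}=\hbar\left(a-j+1+\frac{N}{2}\right)\Phi_j^{(N)},
\ee
so $\Phi_2^{(N)},\Phi_3^{(N)},\dots$ are generated successively, and ${\mathcal W}_N=\langle\Phi_1^{(N)},\Phi_2^{(N)},\dots\rangle$ is completely determined by the pair $(a,c_N)$. The only step needing genuine care is the non-degeneracy of the recursion — that the factor multiplying $c_{m-1}$ is nonzero for all $m\geq2$ — which is immediate here since it equals $-(m-1)/\hbar$; everything else is a direct copy of the $N=0$ argument, with the term $N^2/4$ riding along passively.
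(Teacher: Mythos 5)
Your proof is correct and follows essentially the same route as the paper, which simply transcribes the $N=0$ argument: $\Phi_1^{(N)}$ is the unique big-cell eigenfunction of $c_N$ with eigenvalue $\hbar^{-2}$, and $a$ then generates the remaining basis vectors. You supply more detail than the paper does (the explicit coefficient recursion showing the non-degeneracy $-(m-1)/\hbar\neq 0$), which only strengthens the argument.
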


Using the Kac--Schwarz description (\ref{KS}) it is easy to show that the tau-function $\tau_N({\bf t},\hbar)$ satisfies the Virasoro constraints
\be\label{KdVvirN}
\hbar \widehat{\mathcal L}_m^{(N)}\, \tau_N({\bf t},\hbar)= \frac{\p}{\p t_{2m+1}}\tau_N({\bf t},\hbar), \,\,\,\,\, m\geq 0,
\ee
where
\be
\widehat{\mathcal L}^{(N)}_m= \frac{1}{2}\sum_{k=0}^\infty (2k+1) {t}_{2k+1} \frac{\p}{\p t_{2k+2m+1}}+\frac{1}{4} \sum_{a+b=m-1} \frac{\p^2}{\p t_{2a+1} \p t_{2b+1}}+\mu_0\delta_{m,0},
\ee
and 
\be
\mu_0=\frac{1}{16}-\frac{N^2}{4}.
\ee  
Again, the value of $\mu_0$ can be extracted from the expansion of the first basis vector
\be
\Phi^{(N)}_1(\lambda)=1+\hbar \frac{1-4N^2}{16\lambda}+O(\lambda^{-2}).
\ee
In the next section we prove
\begin{theorem}
There exists a unique (up to normalization) solution of the Virasoro constraints (\ref{KdVvirN}).
\end{theorem}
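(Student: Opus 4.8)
The plan is to collapse the infinite family (\ref{KdVvirN}) into a single scalar equation that can be integrated coefficient by coefficient, exactly mirroring the $N=0$ argument of Section \ref{CAJs}. First I would multiply the $m$-th constraint by $(2m+1)t_{2m+1}$ and sum over all $m\geq 0$. On the right-hand side this assembles the Euler operator $\widehat{D}=\sum_{k\geq0}(2k+1)t_{2k+1}\tfrac{\p}{\p t_{2k+1}}$ acting on $\tau_N$, while on the left it produces $\widehat{W}_N=\sum_{m\geq0}(2m+1)t_{2m+1}\widehat{\mathcal L}^{(N)}_m$. A short rearrangement of the three pieces of $\widehat{\mathcal L}^{(N)}_m$ (reindexing the second-derivative term by $m=a+b+1$, so that $2m+1=2a+2b+3$) shows that this is precisely the $\hbar$-independent cut-and-join operator displayed in the introduction, with its $t_1$-term carrying the coefficient $\tfrac{1}{16}-\tfrac{N^2}{4}=\mu_0$. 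Hence every solution of (\ref{KdVvirN}) necessarily satisfies $\widehat{D}\,\tau_N=\hbar\,\widehat{W}_N\,\tau_N$.

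The key structural observation is that $\widehat{W}_N$ raises the weighted degree in the times by exactly one, where $t_{2j+1}$ is assigned degree $2j+1$. In the bilinear term the factor $t_{2k+2m+1}$ (degree $2k+2m+1$) is removed and $t_{2k+1}t_{2m+1}$ (degree $2k+2m+2$) is inserted; in the second-order term $t_{2k+1},t_{2m+1}$ are removed and $t_{2k+2m+3}$ inserted; and the $\mu_0 t_1$ term simply multiplies by $t_1$. In each case the net change of weighted degree is $+1$. Since $\widehat{D}$ is the grading operator for this weighting, I would write a normalized solution as $\tau_N=\sum_{d\geq0}\tau_N^{(d)}$ with $\tau_N^{(d)}$ of weighted degree $d$ and $\tau_N^{(0)}$ a constant, and match terms of equal degree in $\widehat{D}\,\tau_N=\hbar\,\widehat{W}_N\,\tau_N$. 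This yields the recursion $d\,\tau_N^{(d)}=\hbar\,\widehat{W}_N\,\tau_N^{(d-1)}$.

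This recursion is solvable uniquely once $\tau_N^{(0)}$ is fixed: for $d\geq1$ it gives $\tau_N^{(d)}=\tfrac{\hbar}{d}\widehat{W}_N\tau_N^{(d-1)}$, so normalizing $\tau_N^{(0)}=1$ forces the closed form $\tau_N^{(d)}=\tfrac{\hbar^d}{d!}\widehat{W}_N^{\,d}\cdot1$ and therefore $\tau_N=e^{\hbar\widehat{W}_N}\cdot1$; in particular $\tau_N^{(d)}\propto\hbar^d$, recovering the homogeneity (\ref{HomoN}) as a byproduct. The sole freedom is the overall scale $\tau_N^{(0)}$, which is exactly the ``up to normalization'' ambiguity of the statement, so uniqueness follows. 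Existence is then immediate, since the generalized BGW tau-function attached to the Sato Grassmannian point $\mathcal W_N$ was already shown, via the Kac--Schwarz description (\ref{KS}), to satisfy (\ref{KdVvirN}); alternatively one may check directly that the explicit series $e^{\hbar\widehat{W}_N}\cdot1$ solves every individual constraint using the commutation relations of the $\widehat{\mathcal L}^{(N)}_m$.

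The main obstacle I anticipate is not the reduction but justifying that the single combined equation retains the full content for uniqueness and that the graded recursion is well posed. For uniqueness nothing is lost: any solution of the full system, being a fixed linear combination of the constraints, automatically satisfies $\widehat{D}\,\tau_N=\hbar\,\widehat{W}_N\,\tau_N$ and is thus pinned down degree by degree. The point needing genuine care is the closure and finiteness of the grading---each weighted-degree component is a polynomial supported on the finitely many partitions of $d$ into odd parts, so the infinite sums defining $\widehat{W}_N$ truncate on it and the recursion stays inside the ring of formal power series. Verifying cleanly the degree-shift-by-one property, so that $\widehat{D}$ and $\hbar\widehat{W}_N$ can be matched order by order, is the one explicit computation I would carry out in full.
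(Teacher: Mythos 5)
Your proposal is correct and follows essentially the same route as the paper: the paper's proof (the Lemma in the cut-and-join section) likewise sums the constraints against $(2m+1)t_{2m+1}$ to get $\hbar\widehat{W}_N Z=\widehat{D}Z$, decomposes an arbitrary solution into weighted-homogeneous components, and derives the recursion $Z^{(k+1)}=\tfrac{\hbar}{k+1}\widehat{W}_N Z^{(k)}$, leaving only the constant term as the normalization freedom. Your added remarks on the degree-shift property of $\widehat{W}_N$ and on why the single combined equation suffices for uniqueness are exactly the implicit steps of the paper's argument.
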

This theorem for $\hbar=1$ was proved in \cite{Horozov}, we prove it constructively and describe the solution in terms of the cut-and-join operator. 
Thus, the generalized BGW tau-function $\tau_N({\bf t},\hbar)$ is the unique solution of the Virasoro constraints (\ref{KdVvirN})  which satisfies the normalisation condition
\be\label{normc}
\tau_N({\bf 0},\hbar)=1.
\ee

Equation (\ref{KdVvirN}) for $m=0$ is the string equation for the generalized BGW tau-function. From the KS description it follows that this equation completely specifies the KdV tau-function.
\begin{lemma}
There is only one tau-function of the KdV hierarchy, which satisfies the string equation
\be
\hbar \widehat{\mathcal L}_0^{(N)}\, \tau_N({\bf t})= \frac{\p}{\p t_{1}}\tau_N({\bf t})
\ee
and the normalization condition (\ref{normc}).
\end{lemma}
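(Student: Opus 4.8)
The plan is to translate the statement into the Sato Grassmannian and then reduce it to the uniqueness already recorded for the complete system of Virasoro constraints. Write $\mathcal W$ for the point associated with the given tau-function; the normalization (\ref{normc}) places $\mathcal W$ in the big cell, where it is determined once its first basis vector is known with leading behaviour $1+O(\lambda^{-1})$. Because we are assuming $\tau$ is a tau-function of the \emph{KdV} hierarchy, i.e. a $2$-reduced KP tau-function, the multiplication operator $b=\lambda^2$ of (\ref{KSop}) already stabilizes it, $b\,\mathcal W\subseteq\mathcal W$. Through the correspondence between the $w_{1+\infty}$ action on the Grassmannian and the $W_{1+\infty}$ action on tau-functions, the string equation — the $m=0$ member of (\ref{KdVvirN}), carrying the prescribed constant $\mu_0=\tfrac1{16}-\tfrac{N^2}4$ — is equivalent to the assertion that the operator $a$ of (\ref{KSop}) is a Kac--Schwarz operator for $\mathcal W$, that is $a\,\mathcal W\subseteq\mathcal W$, with exactly that value of the accompanying constant.

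First I would use that Kac--Schwarz operators are closed under composition. Having both $a\,\mathcal W\subseteq\mathcal W$ and $b\,\mathcal W\subseteq\mathcal W$ then forces every operator $b^{m}a$ to stabilize $\mathcal W$,
\[
b^{m}a\,\mathcal W=b^{m}\bigl(a\,\mathcal W\bigr)\subseteq b^{m}\,\mathcal W\subseteq\mathcal W,\qquad m\geq 0 .
\]
As recalled in Section~\ref{KdVsec}, $b^{m}a$ is exactly the Kac--Schwarz operator underlying the $m$-th Virasoro constraint, so each equation of the family (\ref{KdVvirN}) holds for $\tau$, a priori only up to an undetermined constant $\mu_m$ on the right-hand side. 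Note that the bare inclusions $a\,\mathcal W\subseteq\mathcal W$ and $b\,\mathcal W\subseteq\mathcal W$ by themselves do not single out $\mathcal W$, since $a$ and $b$ of (\ref{KSop}) are independent of $N$ and stabilize the whole family of such points; it is precisely the value of $\mu_0$ that must do the remaining work.

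Next I would pin down the constants. From $[a,b]=b$ of (\ref{abcom}) one computes $[b^{j}a,\,b^{k}a]=(k-j)\,b^{j+k}a$, so for non-negative modes the constraint operators realize a centreless Virasoro algebra. Writing $O_m$ for the $m$-th constraint operator, so that $O_m\tau=\mu_m\tau$, the relation $[O_0,O_m]=m\,O_m$ applied to $\tau$ gives $(\mu_0\mu_m-\mu_m\mu_0)\tau=0$ on one side and $m\,\mu_m\tau$ on the other, whence $\mu_m=0$ for every $m>0$, exactly as in the computation of Section~\ref{KdVsec}; the surviving constant $\mu_0$ is the one already supplied by the string equation. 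Thus $\tau$ satisfies the full system (\ref{KdVvirN}) together with (\ref{normc}), and the uniqueness established in the theorem above (proved in the next section) identifies it with $\tau_N$.

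The step I expect to be the main obstacle is the careful bookkeeping underlying the first and third paragraphs: one must verify that the single scalar string-equation PDE is genuinely equivalent to $a$ being a Kac--Schwarz operator with the stated constant, and that the passage from $w_{1+\infty}$ to its central extension $W_{1+\infty}$ reproduces each constraint operator $\widehat{\mathcal L}^{(N)}_m$ up to only a constant, so that the commutator argument indeed detects all of $\mu_m$ for $m>0$ and leaves precisely $\mu_0$ free. This dictionary is, however, exactly the one already used for $N=0$ in Section~\ref{KdVsec}, so granting it, the reduction to the previous theorem is immediate.
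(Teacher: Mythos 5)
Your proof is correct and takes the route the paper itself intends: the paper justifies this lemma only by the remark that it ``follows from the KS description'', and your chain --- string equation plus $2$-reduction give the Kac--Schwarz operators $a$ and $b$, hence all $b^{m}a$, hence the full Virasoro system with the constants $\mu_m$ ($m>0$) killed by the commutator argument already used in Section~\ref{KdVsec}, hence uniqueness by the preceding theorem --- is exactly that argument written out in full. The only inessential slip is the opening claim that a big-cell point is determined by its first basis vector alone; that is false as stated (one also needs $a$ to generate the higher basis vectors), but it plays no role in the rest of your argument.
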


Alternatively, the Virasoro constraints can be derived from the expansion of the operator $\widehat{W}_{c_N}$, the derivation is completely similar to the one from Section \ref{KdVsec}.  In particular, this operator specifies the value of constant $\mu_0$.

\subsection{Cut-and-join operator}

Similar to the case of the BGW tau-function, considered in Section \ref{S1}, we can solve the Virasoro constraints for the generalized BGW tau-function in terms of the cut-and-join operator:
\begin{lemma}
The solution of the Virasoro constraints (\ref{KdVvirN}) with the normalization (\ref{normc}) is given by
\be\label{Gencaj}
\tau_N({\bf t})=e^{\hbar \widehat{W}_{N}}\cdot 1
\ee
where
\be\label{WN}
\widehat{W}_{N}=\frac{1}{2}\sum_{k,m=0}^\infty (2k+1)(2m+1)t_{2k+1}t_{2m+1}\frac{\p}{\p t_{2k+2m+1}}\\
+\frac{1}{4}\sum_{k,m=0}^\infty(2k+2m+3)t_{2k+2m+3}\frac{\p^2}{\p t_{2k+1}\p t_{2m+1}}+\left(\frac{1}{16}-\frac{N^2}{4} \right)t_1\\
=\widehat{W}_{BGW}-\frac{N^2}{4}t_1.
\ee
All other solutions of the Virasoro constraints (\ref{KdVvirN}) correspond to the multiplication of (\ref{Gencaj}) by a constant.
\end{lemma}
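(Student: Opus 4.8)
The plan is to repeat verbatim the construction used for the $N=0$ case in Section \ref{CAJs}, the only change being the value of the string-equation constant $\mu_0=\tfrac{1}{16}-\tfrac{N^2}{4}$. First I would check that the operator written in \eqref{WN} is nothing but
\[
\widehat{W}_N=\sum_{m=0}^\infty (2m+1)\,t_{2m+1}\,\widehat{\mathcal L}^{(N)}_m .
\]
Indeed, the first sum in $\widehat{\mathcal L}^{(N)}_m$ produces $\tfrac12\sum_{k,m}(2k+1)(2m+1)t_{2k+1}t_{2m+1}\,\p/\p t_{2k+2m+1}$; the second-derivative term, after the relabelling $a+b=m-1$ so that $2m+1=2a+2b+3$, produces $\tfrac14\sum_{k,m}(2k+2m+3)t_{2k+2m+3}\,\p^2/\p t_{2k+1}\p t_{2m+1}$; and the $\mu_0\delta_{m,0}$ term contributes $\big(\tfrac{1}{16}-\tfrac{N^2}{4}\big)t_1$, which is exactly $\widehat W_{BGW}-\tfrac{N^2}{4}t_1$. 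Multiplying the $m$-th Virasoro constraint \eqref{KdVvirN} by $(2m+1)t_{2m+1}$ and summing over $m\ge 0$ then collapses the whole family into the single scalar equation $\hbar\,\widehat W_N\,\tau_N=\widehat D\,\tau_N$, where $\widehat D=\sum_{k\ge0}(2k+1)t_{2k+1}\,\p/\p t_{2k+1}$ is the Euler operator, exactly as in \eqref{Wequ}.

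Next I would exploit the homogeneity \eqref{HomoN}. Writing $\tau_N=1+\sum_{k\ge1}\hbar^k\tau_N^{(k)}$ and using that $\widehat W_N$ is independent of $\hbar$, \eqref{HomoN} gives $\widehat D\,\tau_N^{(k)}=k\,\tau_N^{(k)}$; this is consistent with the fact that every term of $\widehat W_N$ raises the $\widehat D$-degree by one. Substituting the expansion into $\hbar\,\widehat W_N\,\tau_N=\widehat D\,\tau_N$ and matching the coefficient of $\hbar^{k+1}$ yields the recursion $\tau_N^{(k+1)}=\tfrac{1}{k+1}\widehat W_N\,\tau_N^{(k)}$. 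With $\tau_N^{(0)}=1$ fixed by the normalization \eqref{normc}, this integrates to $\tau_N^{(k)}=\tfrac{1}{k!}\widehat W_N^{\,k}\cdot1$, hence to the closed form $\tau_N=e^{\hbar\widehat W_N}\cdot1$ of \eqref{Gencaj}. The right-hand side is a well-defined formal series, since $\widehat W_N^{\,k}\cdot1$ is a weight-$k$ polynomial in the times and only finitely many monomials occur at each order in $\hbar$.

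For the final sentence I would argue uniqueness as follows. The constraints \eqref{KdVvirN} are linear and homogeneous in $\tau_N$, so their solution set is closed under addition and under multiplication by constants (formal series in $\hbar$ independent of the times); in particular any constant multiple of a solution is again a solution. That the space of solutions is one-dimensional is exactly the content of the Proposition above, in which the Kac--Schwarz operators $a$ and $c_N$ are shown to specify the point ${\mathcal W}_N$ uniquely, equivalently of the uniqueness-up-to-normalization theorem. Hence every solution of \eqref{KdVvirN} is a constant times $e^{\hbar\widehat W_N}\cdot1$, and the normalization \eqref{normc} selects the constant $1$.

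The step I expect to be the genuine obstacle is not the recursion, which is mechanical, but making the uniqueness self-contained from the summed equation alone. The equation $\hbar\,\widehat W_N\,\tau_N=\widehat D\,\tau_N$ determines $\tau_N^{(k+1)}$ from $\tau_N^{(k)}$ only up to the kernel of $\widehat D$, i.e.\ up to an additive constant at each order in $\hbar$; these ambiguities reassemble precisely into an overall multiplicative factor, which is why the statement reads ``up to a constant'' rather than being fully rigid. Closing this gap without the Kac--Schwarz input would require ruling out nonzero solutions of nonvanishing combined weight, so it is cleanest to import the already-established uniqueness rather than re-derive it. One should also note in passing that $e^{\hbar\widehat W_N}\cdot1$ satisfies each individual constraint \eqref{KdVvirN} and not merely their sum, which holds because it coincides with the generalized Kontsevich tau-function already shown to obey \eqref{KdVvirN}.
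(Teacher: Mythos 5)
Your construction of the solution is essentially the paper's: you resum the constraints (\ref{KdVvirN}) against $(2m+1)t_{2m+1}$ into the single equation $\hbar\widehat{W}_N\tau_N=\widehat{D}\tau_N$, and your verification that $\widehat{W}_N=\sum_{m}(2m+1)t_{2m+1}\widehat{\mathcal L}^{(N)}_m$ (including the relabelling $a+b=m-1$) is correct, as is the resulting recursion and its exponentiation. The one genuine problem is in the uniqueness clause. You grade by powers of $\hbar$ and invoke the homogeneity (\ref{HomoN}), which is a property of the particular tau-function rather than of an arbitrary solution, and you then outsource one-dimensionality of the solution space to the Kac--Schwarz Proposition, ``equivalently'' to the uniqueness Theorem. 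That last step is circular: the Theorem is stated immediately before this Lemma and is \emph{proved by} it (``we prove it constructively and describe the solution in terms of the cut-and-join operator''). The Proposition is not a substitute either: it shows that $a$ and $c_N$ fix the point ${\mathcal W}_N$ of the Sato Grassmannian, i.e.\ it gives uniqueness among KdV tau-functions, whereas the Lemma asserts uniqueness among \emph{all} formal series solving (\ref{KdVvirN}); passing from an arbitrary solution of the Virasoro constraints back to a point of the Grassmannian is precisely what is not available.

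The obstacle you flag (``ruling out nonzero solutions of nonvanishing combined weight'') is illusory, and the paper closes it by a one-line change of grading. Decompose an \emph{arbitrary} solution as $Z=C+\sum_{k\ge1}Z^{(k)}$ by weighted $t$-degree, $\widehat{D}Z^{(k)}=k\,Z^{(k)}$, instead of by powers of $\hbar$. Every monomial of $\widehat{W}_N$ raises this degree by exactly one, so the summed equation splits as $\hbar\widehat{W}_N Z^{(k)}=\widehat{D}Z^{(k+1)}=(k+1)Z^{(k+1)}$, and $\widehat{D}$ is invertible on each positive-degree component; hence $Z^{(k)}=\frac{\hbar^{k}}{k!}\widehat{W}_N^{k}\,C$ with no freedom beyond the constant $C=Z({\bf 0})$. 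This yields the ``all other solutions are constant multiples'' clause directly from the summed equation, with no Kac--Schwarz input and no appeal to (\ref{HomoN}). Your closing observation --- that $e^{\hbar\widehat{W}_N}\cdot 1$ satisfies each individual constraint because it must coincide with the already-constructed tau-function, the Virasoro constraints being a priori stronger than their sum --- is the correct way to finish and is exactly how the paper's argument implicitly closes as well.
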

\begin{proof}
Let us consider an arbitrary series in the time variables $t_k$ 
\be
Z({\bf t})=C+\sum_{k=1}^{\infty} Z^{(k)}({\bf t}).
\ee
where $Z^{(k)}({\bf t})$ is a homogeneous polynomial of degree $k$,
\be\label{neqf}
\widehat{D} Z^{(k)}({\bf t})= k\,Z^{(k)}({\bf t}),
\ee
and $C$ is some constant.
Then, if $Z(t)$ solves the Virasoro constraints (\ref{KdVvirN}), then
\be
\hbar \widehat{W}_{N}\,Z({\bf t})= \widehat{D}  Z({\bf t}).
\ee
From the comparison of the terms in the r.h.s. and the r.h.s. with the same degree we conclude
\be
\hbar \widehat{W}_{N}\,Z^{(k)}({\bf t})=\widehat{D} Z^{(k+1)}({\bf t}),
\ee
thus, from (\ref{neqf}) it follows that
\be
Z^{(k+1)}({\bf t})=\frac{\hbar}{k+1} \widehat{W}_{N}\,Z^{(k)}({\bf t})
\ee
or
\be
Z^{(k)}({\bf t})=\frac{\hbar^{k}}{k!}\widehat{W}_{N}^k\, C.
\ee
In particular, for $C=1$ we get the solution (\ref{Gencaj}), which coincides with the generalized BGW tau-function.
\end{proof}

We call (\ref{WN}) the cut-and-join operator for the generalized BGW tau-function.
This operator does not belong to the $W_{1+\infty}$ algebra of symmetries of the KP hierarchy, thus, integrability is not obvious from the representation (\ref{Gencaj}).

From the proof of Lemma \ref{WN} we see that the coefficients of the topological expansion
\be\label{sumhbar}
\tau_N({\bf t},\hbar)=1+\sum_{k=1}^\infty \hbar^k\, \tau_{N}^{ {(k)}},
\ee
satisfy the recursion
\be\label{recurN}
 \tau^{(k+1)}_{N}=\frac{1}{k+1}\widehat{W}_{N}\, \tau^{(k)}_{N}.
\ee
Using this recursion we calculated $ \tau^{(k)}_{N}$ for $k\leq60$, expressions for $k\leq 10$ are given in Appendix \ref{B}. There we introduce
\be
B_k(N)=(-1)^k a_k(N+1)=(1-4N^2)(3^2-4N^2)\dots((2k-1)^2-4N^2).
\ee
In Section \ref{Polysec} we show that  for $k\leq \frac{m(m+1)}{2}$ the polynomials $\tau_{N}^{(k)}$ are divisible by $B_m(N)$.

From (\ref{Gencaj}) we see that the tau-function is actually a series in $N^2$ (not in $N$), thus 
\be\label{Nsym}
\tau_{-N}^{BGW}({\bf t},\hbar)=\tau_{N}^{BGW}({\bf t},\hbar).
\ee
 From this observation and from the explicit expression for $\tau_N^{(1)}$ we conclude
\begin{lemma}
\be
\tau_{N}({\bf t},\hbar)=\tau_{\tilde{N}}({\bf t},\hbar)
\ee
if and inly if $\tilde{N}=\pm N$.
\end{lemma}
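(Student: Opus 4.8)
The plan is to prove the two implications separately, using the cut-and-join representation (\ref{Gencaj}) together with the observation that $\widehat{W}_N$ depends on the parameter only through $N^2$. The reverse implication is immediate: since the operator (\ref{WN}) involves $N$ solely through the combination $N^2$, the same is true of $\tau_N({\bf t},\hbar)=e^{\hbar\widehat{W}_N}\cdot 1$, which is precisely the symmetry (\ref{Nsym}), namely $\tau_{-N}=\tau_N$. Hence $\tilde N=\pm N$ forces $\tau_{\tilde N}=\tau_N$ at once.

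For the forward implication I would extract the lowest nontrivial coefficient of the topological expansion (\ref{sumhbar}). From the recursion (\ref{recurN}) with $\tau^{(0)}_N=1$ we have $\tau^{(1)}_N=\widehat{W}_N\cdot 1$; the first two sums in (\ref{WN}) each contain derivatives and therefore annihilate the constant $1$, so only the linear term survives,
\[
\tau^{(1)}_N=\left(\frac{1}{16}-\frac{N^2}{4}\right)t_1 .
\]
Thus the coefficient of $\hbar\, t_1$ in $\tau_N({\bf t},\hbar)$ equals $\tfrac{1}{16}-\tfrac{N^2}{4}$. If $\tau_N=\tau_{\tilde N}$ as formal series in the times and in $\hbar$, then in particular this coefficient must agree for the two parameters, giving $\tfrac{1}{16}-\tfrac{N^2}{4}=\tfrac{1}{16}-\tfrac{\tilde N^2}{4}$, i.e. $N^2=\tilde N^2$, equivalently $\tilde N=\pm N$.

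There is essentially no serious obstacle here; the only point to keep in mind is that $\tau_N$ is a series in $N^2$ rather than in $N$, so a single coefficient can detect $N^2$ but can never resolve the sign of $N$ — which is exactly why the answer is $\tilde N=\pm N$ and not $\tilde N=N$. Combining the two directions then yields the claimed equivalence $\tau_N=\tau_{\tilde N}\iff\tilde N=\pm N$.
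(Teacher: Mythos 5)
Your proposal is correct and follows essentially the same route as the paper: the paper likewise deduces $\tau_{-N}=\tau_N$ from the fact that $\widehat{W}_N$ (hence the whole series) depends only on $N^2$, and then uses the explicit first coefficient $\tau^{(1)}_N=\frac{1-4N^2}{16}\,t_1$ to force $N^2=\tilde N^2$. Your computation of $\tau^{(1)}_N=\widehat{W}_N\cdot 1$ agrees with the paper's Appendix~\ref{B}, so nothing further is needed.
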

In particular, it means that the generalized BGW tau-function is not periodic in the variable $N$, and it is enough to consider only the values of $N$ with $\Re \, N\geq 0$.

Operator (\ref{caj}) has a rather natural free field representation. Indeed, let us introduce a bosonic current
\be
\widehat{J}(z)=\sum_{k=1}^\infty \left((2k+1)t_{2k+1}z^{2k}+\frac{1}{2z^{2k+2}}\frac{\p}{\p t_{2k+1}}\right)+\frac{i N}{2z},
\ee
then
\be
\widehat{W}_{N}=\frac{1}{2\pi i}\oint \left(\normord\frac{1}{3}\widehat{J}(z)^3+\frac{1}{16z^2}\widehat{J}(z)\normord\right) z dz,
\ee
where we use the standard normal ordering for the bosonic operators.

To consider the genus expansion we have to rescale the times $t_k\mapsto \hbar^{-1} t_k$. Then we can rewrite (\ref{Gencaj}) as
\be\label{BCH}
\tau_N(\hbar^{-1}{\bf t},\hbar)=e^{\frac{1}{\hbar^2}\widehat{W}^{(-1)}+\widehat{W}^{(0)}+\hbar^2\widehat{W}^{(1)}}\cdot 1,
\ee
where
\be
\widehat{W}^{(-1)}=-\frac{S^2}{4}t_1,\\
\widehat{W}^{(0)}=\frac{1}{2}\sum_{k,m=0}^\infty (2k+1)(2m+1)t_{2k+1}t_{2m+1}\frac{\p}{\p t_{2k+2m+1}}
+\frac{1}{16}t_1,\\
\widehat{W}^{(1)}=\frac{1}{4}\sum_{k,m=0}^\infty(2k+2m+3)t_{2k+2m+3}\frac{\p^2}{\p t_{2k+1}\p t_{2m+1}},
\ee
and we introduced a new parameter
\be\label{Svar}
S=\hbar N.
\ee
From this representation it follows that after the times rescaling the generalized BGW tau-function has a natural genus expansion
\be
\tau_N(\hbar^{-1}{\bf t},\hbar)=\exp\left(\sum_{g=0}^\infty \hbar^{2g-2}{\mathcal F}_g({\bf t},S)\right).
\ee
From the zeroth equation (\ref{KdVvirN}) it immediately follows that
\be
\sum_{k=0}^\infty (2k+1)\tilde{t}_{2k+1}\frac{\p}{\p t_{2k+1}}{\mathcal F}_g({\bf t},S)=\frac{S^2}{2}\delta_{g,0}-\frac{1}{8}\delta_{g,1},
\ee
where the dilaton shift of the time variables is defined by
\be\label{dilsh}
\tilde{t}_k=t_k-\frac{2}{\hbar}\delta_{k,1}.
\ee
Thus, up to the genus zero and genus one contributions, we can express all ${\mathcal F}_g({\bf t},S)$ in terms of the ``moment variables"
\be
T_k=\frac{t_k}{(2-t_1)^k},
\ee
namely
\be
{\mathcal F}_g({\bf t},S)=\tilde {\mathcal F}_{g}({\bf T},S)+\left(\frac{S^2}{2}\delta_{g,0}-\frac{1}{8}\delta_{g,1}\right)\log\left(1-\frac{t_1}{2}\right).
\ee
Moreover, from (\ref{HomoN}) it follows that ${\mathcal F}_g({\bf T},S)$ are the homogeneous functions of degree $g-1$
\be
\left(\sum_{m=1}^{\infty} m\, T_{2m+1}\frac{\p}{\p T_{2m+1}}-\frac{S}{2}\frac{\p}{\p S}\right)\tilde{\mathcal F}_g({\bf T},S)=(g-1)\, \tilde {\mathcal F}_{g}({\bf T},S).
\ee

Thus, the genus $g$ contribution is given by the sum
\be
 \tilde {\mathcal F}_{g}({\bf T},S)=\sum_{k=0}^\infty (-1)^k S^{2k} \tilde {\mathcal F}_{g}^{(k)}({\bf T}),
\ee
where we introduced the polynomials $ \tilde {\mathcal F}_{g}^{(k)}({\bf T})$ such that
\be
\sum_{m=1}^{\infty} m\, T_{2m+1}\frac{\p}{\p T_{2m+1}} \tilde {\mathcal F}_{g}^{(k)}({\bf T})=(g+k-1)\tilde {\mathcal F}_{g}^{(k)}({\bf T}).
\ee
For $k=0$ they coincide with the free energies for BGW tau-function, given in Appendix \ref{A},
\be
\tilde {\mathcal F}_{g}^{(0)}({\bf T})= {\mathcal F}_{g}({\bf T}).
\ee
Using the recursion (\ref{recurN}) we found expressions for ${\mathcal F}_g^{(k)}({\bf T})$ for all $g+k\leq 20$. For $k>0$ and $g+k\leq 8$ they are given in Appendix \ref{C}.

\subsection{Quantum spectral curve}\label{QSC}

The quantum spectral curve\footnote{For more details on quantum spectral curves see \cite{EynQSC} and references therein.} for the generalized BGW tau-function, as for many other examples of the generating functions related to the KP/Toda hierarchies \cite{ALS,Open1,Aenum}, can be derived from the Sato Grassmannian description. Actually, the principal specialisation of any KP tau-function coincides with the first basis vector of the corresponding point of the Sato Grassmannian. Often, the KS algebra contains an operator, which annihilates this basis vector, and namely this operator describes the quantum spectral curve.

 It follows from (\ref{protoQSC}) that for the generalized BGW tau-function this vector is annihilated by a shifted operator $c_N$:
\be
\left(c_N-\frac{1}{\hbar^2}\right)\Phi_1^{(N)}(\lambda)=0.
\ee
Let us introduce a new variable:
\be
x=\lambda^2.
\ee
Then the corresponding wave function
 \be
\Psi_S(x):=\frac{\hbar}{\sqrt{4\pi\, }x^\frac{1}{4}}\,\,e^{\frac{2\sqrt{x}}{\hbar }}\,\Phi_1^{(S\hbar^{-1})}{(\sqrt{x})}
 \ee
is the modified Baker function
\be
 \Psi_S(x)=I_{S\hbar^{-1}}\left(\frac{2\sqrt{x}}{\hbar}\right).
\ee
It satisfies the modified Bessel equation
\be
\left(\hbar^2 x^2\frac{\p^2}{\p x^2}+\hbar^2 x\frac{\p}{\p x}-x-\frac{S^2}{4}\right) \Psi_S(x)=0.
\ee
which is the quantum spectral curve equation for the generalized BGW model. If we introduce the operators
\be
\hat{x}=x,\,\,\,\,\ \hat{y}=\hbar \frac{\p}{\p x}, 
\ee
then we can rewrite the quantum spectral curve equation as
\be
\left(\hat{x}\hat{y}\hat{x}\hat{y}-\hat{x}-\frac{S^2}{4}\right)\Psi_S(x)=0,
\ee
which in the classical limit reduces to the curve
\be
x^2 y^2 -x -\frac{S^2}{4}=0
\ee
or, equivalentely
\be\label{gensc}
y^2=\frac{1}{x}+\frac{S^2}{4x^2}.
\ee
This curve admits a rational parametrization:
\be\label{ratp}
x=\frac{S^2\, (z-1)}{(z-2)^2},\\
y=\frac{z(z-2)}{2S(z-1)},
\ee
thus, the spectral curve is of genus zero. 

The branch points are the zeros of the differential $dx$,
\be
dx=-\frac{S^2 z}{(z-2)^3}dz,
\ee
which do not coincide with the zeros of the differential $dy$,
\be
dy={\frac {{z}^{2}-2z+2}{ 2\left( z-1 \right) ^{2}S}} dz.
\ee
We see, that on the curve (\ref{ratp}) there is only one branch point, 
\be
z=0,
\ee
which corresponds to
\be
y=0,\,\,\,\,\, x=-\frac{S^2}{4}.
\ee

For the BGW model, that is for $S=0$, the quantum spectral curve equation reduces to
\be
\left(\hat{y}\hat{x}\hat{y}-1\right)\Psi_0(x)=0
\ee
and  in the classical limit 
\be\label{simsc}
y^2=\frac{1}{x}.
\ee
In this limit $y$ plays the role of the global rational coordinate.
This can be considered as the curve for the r-spin intersection numbers with $r=-2$.

We claim that the Chekhov--Eynard--Orantin topological reduction \cite{Eyn1,Eyn2,Eyn3} for the spectral curves (\ref{gensc}) and (\ref{simsc}) should give the expressions for the correlation functions of the generalized BGW and BGW models correspondingly. However, in the next section we will derive the recursion relation for the correlation functions using only the Virasoro constraints (\ref{KdVvirN}).

\subsection{Correlation functions}\label{cor}

The Virasoro constraints can also be reformulated in terms of the correlation functions (multiresolvents). This reformulation leads to the loop equations \cite{Loop1,Loop2,Loop3,Loop4,Loop5,Loop6}.

Sometimes the loop equations can be solved systematically, producing simple recursive relations for the correlation functions \cite{AMM0,AMM1,AMM2,AMM3}. 
Let use define the connected correlation functions 
\be\label{corfunc}
W_{g,n}(x_1,\dots,x_n):=\widehat{\nabla}(x_1)\widehat{\nabla}(x_2)\dots\widehat{\nabla}(x_n) \mathcal{F}_g({\bf t},S)\Big|_{{\bf t}=0},
\ee
where
\be
\widehat{\nabla}(x)=\sum_{k=1}^\infty \frac{1}{x^{k+1}}\frac{\p}{\p t_{2k+1}}.
\ee
Obviously, the correlation functions are symmetric functions of the variables $x_1,\dots,x_n$ and contain all information about the tau-function. 

From the Virasoro constraints (\ref{KdVvirN}) it follows that the correlation functions of the generalized BGW tau-function satisfy the loop equations:
\be\label{Weq}
W_{g,m+1}(x,x_1,\dots,x_m)=\frac{1}{4}W_{g-1,m+2}(x,x,x_1,\dots,x_m)+\left(\frac{1}{16x}\delta_{g,1}-\frac{S^2}{4x}\delta_{g,0}\right)\delta_{m,0}\\
+\frac{1}{4}\sum_{q+p=g,\,I\cup J=\{1,2,\dots,m\}} W_{q,m_1+1}(x,x_{i_1},\dots,x_{i_{m_1}})W_{p,m_2+1}(x,x_{j_1},\dots,x_{j_{m_2}})\\
+\sum_{i=1}^m\left(x_i\frac{\p}{\p x_i}+\frac{1}{2}\right)\frac{W_{g,m}(x_1,\dots,x_{i-1},x,x_{i+1},\dots,x_m)-W_{g,m}(x_1,\dots,x_m)}{x-x_i}
\ee
for all $m\geq 0$ and $g\geq 0$. This is a simple $S$-deformation of the loop equations for the BGW tau-function, which were derived in \cite{IMMM3}.

The simplest case is $g=m=0$, and in this case (\ref{Weq}) gives a quadratic equation for $W_{0,1}$:
\be
W_{0,1}(x)^2-4W_{0,1}(x)-\frac{S^2}{x}=0
\ee
so that
\be
W_{0,1}(x)=2\left(1-\sqrt{1+\frac{S^2}{4x}}\right),
\ee
or
\be
W_{0,1}(x)=2\sum_{k=0}^\infty \left(-\frac{S^2}{8x}\right)^{k+1}\frac{(2k-1)!!}{(k+1)!}.
\ee

This allows us to solve recursively the Loopequations (\ref{Weq}) for $g+m>0$, 
\begin{lemma}
\be\label{lloprec}
W_{g,m+1}(x,x_1,\dots,x_m)\\
=\frac{1}{\sqrt{1+\frac{S^2}{4x}}}\left(\frac{1}{4}\sum'_{q+p=g, I\cup J=\{1,2,\dots,m\}} W_{q,m_1+1}(x,x_{i_1},\dots,x_{i_{m_1}})W_{p,m_2+1}(x,x_{j_1},\dots,x_{j_{m_2}})\right.\\
+\sum_{i=1}^m\left(x_i\frac{\p}{\p x_i}+\frac{1}{2}\right)\frac{W_{g,m}(x_1,\dots,x_{i-1},x,x_{i+1},\dots,x_m)-W_{g,m}(x_1,\dots,x_m)}{x-x_i}\\
\left.+\frac{1}{4}W_{g-1,m+2}(x,x,x_1,\dots,x_m)+\frac{1}{16x}\delta_{g,1}\delta_{m,0}\right),
\ee
where we exclude from the sum two terms: with $q=g$, $I=\{1,2,\dots,m\}$, $J=\{\emptyset\}$ and with $p=g$, $I=\{\emptyset\}$, $J=\{1,2,\dots,m\}$.
\end{lemma}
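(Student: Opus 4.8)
The plan is to derive this recursion directly from the loop equation (\ref{Weq}) by isolating, within the quadratic term, precisely the two summands in which one factor is the full unknown $W_{g,m+1}(x,x_1,\dots,x_m)$ itself, moving those contributions to the left-hand side, and dividing by the prefactor that results.

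First I would specialize (\ref{Weq}) to the range $g+m>0$, so that the inhomogeneous $-\frac{S^2}{4x}\delta_{g,0}\delta_{m,0}$ contribution is absent and only $\frac{1}{16x}\delta_{g,1}\delta_{m,0}$ survives. Next I would examine the double sum $\frac{1}{4}\sum_{q+p=g,\,I\cup J=\{1,\dots,m\}}W_{q,m_1+1}W_{p,m_2+1}$ and single out the two index choices flagged in the statement, namely $(q=g,\,I=\{1,\dots,m\},\,J=\emptyset)$ and $(p=g,\,I=\emptyset,\,J=\{1,\dots,m\})$. In each of these the partner factor is $W_{0,1}(x)$, so each contributes $\frac14 W_{g,m+1}(x,x_1,\dots,x_m)\,W_{0,1}(x)$; together they give $\frac12 W_{g,m+1}(x,x_1,\dots,x_m)\,W_{0,1}(x)$, and the remaining summands are exactly the primed sum $\sum'$.

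I would then recast the loop equation as $W_{g,m+1}=\tfrac12 W_{g,m+1}\,W_{0,1}+(\text{primed sum and remaining terms})$ and collect the $W_{g,m+1}$ contributions on the left, obtaining the factor $\bigl(1-\tfrac12 W_{0,1}(x)\bigr)$. The key simplification is to insert the closed form $W_{0,1}(x)=2\bigl(1-\sqrt{1+\tfrac{S^2}{4x}}\bigr)$ established just above, whence $1-\tfrac12 W_{0,1}(x)=\sqrt{1+\tfrac{S^2}{4x}}$. Dividing through by this square root yields the claimed recursion. Finally I would observe that the recursion is genuinely well-founded: every $W_{g',n'}$ on the right has strictly smaller value of $2g'+n'$ than $2g+(m+1)$, since the primed sum excludes the two terms of maximal complexity and the $W_{g-1,m+2}$ term lowers the genus, so the formula determines all $W_{g,n}$ starting from $W_{0,1}$.

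The only real subtlety — there is no deep obstacle — lies in the bookkeeping of the second step: one must verify that the two extracted index configurations are the unique pairings in which the full unknown $W_{g,m+1}$ reappears as a factor, equivalently that its partner is always $W_{0,1}$, so that no other term produces a spurious $W_{g,m+1}$. Once this is confirmed, the collapse of $1-\tfrac12 W_{0,1}$ to the single radical $\sqrt{1+S^2/(4x)}$ is immediate from the explicit resolvent, and the rest is purely algebraic rearrangement.
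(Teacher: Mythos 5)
Your proposal is correct and follows exactly the route the paper intends: the lemma is stated as an immediate consequence of the loop equation (\ref{Weq}) together with the closed form $W_{0,1}(x)=2\bigl(1-\sqrt{1+\tfrac{S^2}{4x}}\bigr)$, and your extraction of the two pairings containing $W_{g,m+1}\cdot W_{0,1}$, the identity $1-\tfrac{1}{2}W_{0,1}(x)=\sqrt{1+\tfrac{S^2}{4x}}$, and the check that the recursion is well-founded in $2g'+n'$ are precisely the (unwritten) steps behind the paper's remark that the loop equations can be solved recursively for $g+m>0$. No gaps.
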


In particular, the genus zero two-point function is
\be
W_{0,2}(x_1,x_2)=\frac{1}{\sqrt{1+\frac{S^2}{4x_1}}}\left(x_2\frac{\p}{\p x_2}+\frac{1}{2}\right)\frac{W_{0,1}(x_1)-W_{0,1}(x_2)}{x_1-x_2}\\
=\frac{1}{2(x_1-x_2)^2}\left(\frac{S^2+2(x_1+x_2)}{\sqrt{1+\frac{S^2}{4x_1}}\sqrt{1+\frac{S^2}{4x_2}}}-2(x_1+x_2)\right).
\ee
It is regular at the coincident points (that is when $x_1=x_2$ and $y_1=y_2$), and has a second order pole when the points are on different sheets above the same base point (that is when $x_1=x_2$ and $y_1=-y_2$),
\be
W_{0,2}(x_1,x_2)=-\frac{4x_1}{(x_1-x_2)^2}+\dots.
\ee
 
In genus one we have
\be
W_{1,1}(x)=\frac{1}{\sqrt{1+\frac{S^2}{4x}}}\left(\frac{1}{4}W_{0,2}(x,x)+\frac{1}{16x}\right)=\frac{1}{2^4 x\left(1+\frac{S^2}{4x}\right)^\frac{5}{2}},
\ee
or
\be
W_{1,1}(x)=\frac{1}{2^4\cdot 3 x}\sum_{k=0}^\infty \left(-\frac{S^2}{8x}\right)^{k}\frac{(2k+3)!!}{k!}.
\ee
On the next level of recursion we have
\be
W_{0,3}(x_1,x_2,x_3)=-\frac{S^2}{8x_1x_2x_3\sqrt{1+\frac{S^2}{4x_1}}\sqrt{1+\frac{S^2}{4x_2}}\sqrt{1+\frac{S^2}{4x_3}}},
\ee
\be
W_{1,2}(x_1,x_2)=\frac{{S}^{8}-6 \left( \,x_{{1}}+\,x_{{2}} \right) {S}^{6}-136\,{S}^{4}x_{
{2}}x_{{1}}-128\,x_{{1}}x_{{2}} \left( x_{{2}}+x_{{1}} \right) {S}^{2}
+128\,{x_{{1}}}^{2}{x_{{2}}}^{2}
}{2^{12}x_1^3 x_2^3 \left(1+\frac{S^2}{4x_1}\right)^\frac{7}{2}\left(1+\frac{S^2}{4x_2}\right)^\frac{7}{2}},
\ee
\be
W_{2,1}(x)=\frac{S^4-20S^2x+9x^2}{2^{10}x^4\left(1+\frac{S^2}{4x}\right)^\frac{11}{2}}.
\ee

For the stable cases (the cases with $2g+n-2>0$, that is, for all $W_{g,n}$'s except for $W_{0,1}$ and $W_{0,2}$) let use define the differentials forms
\be
\omega_{g,n}(z_1,\dots,z_n):=S^{2g-2+n}W_{g,n}(x_1,\dots,x_n)d\sqrt{x_1}\dots d\sqrt{x_n}\\
=S^{2g-2+n}\frac{W_{g,n}(x_1,\dots,x_n)}{2^n\sqrt{x_1\dots x_n}}d x_1\dots dx_n.
\ee
They satisfy the recursion relations which follow from (\ref{lloprec}) and can be easily found for small $g$ and $n$. In particular,
\be
\omega_{1,1}(z)={\frac {z-1}{z^{4}}} dz
\ee
\be
\omega_{2,1}(z)={\frac { \left( 105-210\,z+133\,{z}^{2}-28\,{z}^{3}+{z}^{4} \right) 
 \left( z-2 \right) ^{2} \left( z-1 \right) }{{z}^{10}}}
dz
\ee
\be
\omega_{1,2}(z_1,z_2)=\left(54\,{z_{{2}}}^{2}{z_{{1}}}^{4}+24\,{z_{{2}}}^{3}{z_{{1}}}^{3}-14\,{z_{
{2}}}^{3}{z_{{1}}}^{4}+54\,{z_{{2}}}^{4}{z_{{1}}}^{2}-14\,{z_{{2}}}^{4
}{z_{{1}}}^{3}\right.\\
+{z_{{2}}}^{4}{z_{{1}}}^{4}+24\,{z_{{1}}}^{2}{z_{{2}}}^{
2}-80\,{z_{{1}}}^{4}z_{{2}}-24\,{z_{{1}}}^{3}{z_{{2}}}^{2}-24\,{z_{{1}
}}^{2}{z_{{2}}}^{3}\\
\left.-80\,z_{{1}}{z_{{2}}}^{4}+40\,{z_{{2}}}^{4}+40\,{z_
{{1}}}^{4}
\right){\frac {1 }{{z_{{1}}}^{6}{z_{{2}}}^{6}}}
 dz_1 dz_2
\ee
\be
\omega_{0,3}(z_1,z_2,z_3)=\frac{8}{z_1^2z_2^2z_3^2}dz_1dz_2dz_3
\ee

\begin{conjecture}
All  $\omega_{g,n}$ are the meromorphic differentials, defined on the spectral curve (\ref{ratp}) and symmetric in $z_j$'s. Moreover, for any $g$ and $n$
\be
\frac{z_1^2\dots z_n^2\,\omega_{g,n}(z_1,\dots,z_n)}{d z_1\dots dz_n}
\ee
is a polynomial in each of variables $z_1^{-1},\dots,z_n^{-1}$. Thus, $\omega_{g,n}$ have poles of finite degree only at the branch point $z_j=0$.
\end{conjecture}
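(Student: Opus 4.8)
\section*{Proof proposal for the Conjecture}

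The plan is to prove the three assertions -- meromorphy and symmetry in the $z_i$, the polynomial-in-$z_i^{-1}$ structure, and the localization of all poles at the branch point $z_j=0$ -- by induction on the level $2g-2+n$, using the recursion of the preceding Lemma as the inductive step, after transporting everything to the global coordinate $z$ of the rational curve (\ref{ratp}). First I would record the dictionary between the two descriptions. From (\ref{ratp}) one has $x=\frac{S^2(z-1)}{(z-2)^2}$, hence $1+\frac{S^2}{4x}=\frac{z^2}{4(z-1)}$, so the square root that governs the whole recursion becomes $\sqrt{1+\frac{S^2}{4x}}=\frac{z}{2\sqrt{z-1}}$, while $\sqrt{x}=\frac{S\sqrt{z-1}}{z-2}$ and $d\sqrt{x}=-\frac{Sz}{2\sqrt{z-1}(z-2)^2}dz$. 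The sheet involution $\sqrt{x}\mapsto-\sqrt{x}$ is the M\"obius map $\sigma(z)=\frac{z}{z-1}$, whose fixed point at finite $x$, namely $z=0$, is exactly the unique zero of $dx=-\frac{S^2z}{(z-2)^3}dz$ (the other fixed point $z=2$ sits over $x=\infty$, where $dx$ has a pole, not a zero). The remaining special fibres are $x=\infty$ at $z=2$ and $x=0$ at $z=1,\infty$; note that $z=\infty$ is an \emph{unramified} point of $x$. The key arithmetic observation is that in each variable the factor $\sqrt{z_i-1}$ enters $d\sqrt{x_i}$ and $(1+\frac{S^2}{4x_i})^{1/2}$ in the same way, so a half-integer power of $(1+\frac{S^2}{4x_i})$ yields a single-valued rational differential in $z_i$ precisely when that power is \emph{odd}.

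This dictates the inductive invariant I would carry through the recursion alongside the conjecture: for every stable $(g,n)$ the resolvent $W_{g,n}$ is \emph{odd} under each sheet exchange $x_i^{1/2}\mapsto-x_i^{1/2}$ (equivalently, it is a rational function of the $x_i$ times an odd power of each $(1+\frac{S^2}{4x_i})^{1/2}$); it is symmetric; it is a power series in $x_i^{-1}$ with no constant term; and it vanishes to high enough order at $x_i=0$ that $\omega_{g,n}$ stays regular there. The base cases are the explicit $\omega_{0,3}$ and $\omega_{1,1}$, together with $W_{0,1}$ and the two-point function, whose combination $W_{0,2}\,d\sqrt{x_1}\,d\sqrt{x_2}$ I would first rewrite in the $z$-coordinate as the Bergman kernel $\frac{dz_1\,dz_2}{(z_1-z_2)^2}$ of $\mathbb{P}^1$ up to a sheet-even remainder. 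In the inductive step each term of (\ref{lloprec}) is translated into $z$: the prefactor $(1+\frac{S^2}{4x})^{-1/2}$ contributes one odd power of $\sqrt{z-1}$ in the recursion variable; the quadratic terms and the $W_{g-1,m+2}$ term supply an even number of sheet exchanges there; and the difference quotients $\frac{W_{g,m}(\dots,x,\dots)-W_{g,m}(\dots,x_i,\dots)}{x-x_i}$ encode the contraction with the Bergman kernel. Parity then forces the bracket to be sheet-even in the recursion variable, so $\omega_{g,m+1}$ is rational in all the $z_j$, the $\sqrt{z_i-1}$ from each $d\sqrt{x_i}$ being cancelled by the induced odd power of $(1+\frac{S^2}{4x_i})^{1/2}$ in the external variable. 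Finally the simple zero of $dx$ at $z=0$ cancels the pole of the prefactor there, the power-series-in-$x_i^{-1}$ property keeps $\omega_{g,n}$ regular at $z_i=2$, and the induced vanishing at $x_i=0$ removes the apparent poles at $z_i=1,\infty$; what survives are poles only at $z_j=0$, with $z_j^2\,\omega_{g,n}/dz_j$ regular at $z_j=\infty$, i.e. a polynomial in each $z_j^{-1}$.

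The cleanest way to organize the finish -- and the one I would ultimately prefer -- is to observe that (\ref{lloprec}) is precisely the Chekhov--Eynard--Orantin topological recursion for the spectral curve (\ref{ratp}): the prefactor and the difference quotients assemble into the recursion kernel $K(z_0,z)=\frac{\frac12\int_{\sigma(z)}^{z}\omega_{0,2}(z_0,\cdot)}{\big(y(z)-y(\sigma(z))\big)dx(z)}$, and the residue $\mathrm{Res}_{z\to0}$ at the single zero of $dx$ reproduces the right-hand side (here $y(\sigma(z))=-y(z)$, and one checks $dy\neq0$ at $z=0$, so the branch point is admissible). Once this identification is made, the Eynard--Orantin theorem yields at one stroke that all stable $\omega_{g,n}$ are symmetric meromorphic differentials whose only poles sit at the branch point $z_j=0$, with order bounded by $6g-4+2n$; since $z=\infty$ is a regular point of $x$, the $\omega_{g,n}$ are holomorphic there, which is exactly the assertion that $z_j^2\,\omega_{g,n}/dz_j$ is a polynomial in $z_j^{-1}$.

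The main obstacle, in either route, is the same: matching the Virasoro-derived recursion (\ref{lloprec}) to the kernel form term by term, and in particular showing that the sheet-\emph{even} pieces produced by the difference quotients (the subtracted term $W_{g,m}(x_1,\dots,x_m)$, independent of the recursion variable) cancel against the sheet-even remainder of the mixed-parity two-point function $W_{0,2}$, so that no spurious poles appear at the non-branch fibres $z_j=1,2,\infty$. Establishing this cancellation -- equivalently, verifying that the explicit residue computation at $z=0$ collapses to (\ref{lloprec}) -- is the technical heart; the parity bookkeeping and the pole-order bound (hence the degree of the polynomial in $z_j^{-1}$) then follow routinely, either from the Eynard--Orantin estimate or by propagating degrees through the induction.
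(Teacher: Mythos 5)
The statement you are proving is left as a \emph{conjecture} in the paper: the author gives no proof, remarking only that ``it should be simple to prove this conjecture using the Chekhov--Eynard--Orantin topological recursion methods, which are beyond the scope of this paper.'' So there is no paper proof to compare against; what you have written is an execution plan for exactly the route the paper defers. Your coordinate dictionary is correct and useful: $1+\frac{S^2}{4x}=\frac{z^2}{4(z-1)}$, $d\sqrt{x}=-\frac{Sz}{2\sqrt{z-1}(z-2)^2}\,dz$, the deck involution $\sigma(z)=\frac{z}{z-1}$ with $y\circ\sigma=-y$, the identification of $z=0$ as the unique admissible branch point, and the observation that odd half-integer powers of $1+\frac{S^2}{4x_i}$ are exactly what make $W_{g,n}\,d\sqrt{x_1}\cdots d\sqrt{x_n}$ rational in the $z_i$. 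The pole-order bound $6g-4+2n$ is consistent with the displayed $\omega_{1,1},\omega_{0,3},\omega_{1,2},\omega_{2,1}$.

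However, this is a strategy, not a proof, and the gap sits precisely at the step you yourself flag as ``the technical heart.'' Everything in the conclusion --- symmetry, meromorphy, absence of poles at $z=1,2,\infty$, polynomiality in $z_j^{-1}$ --- hinges on the identification of (\ref{lloprec}) with the CEO residue formula on (\ref{ratp}) (or, in your inductive variant, on the cancellation of the sheet-even pieces produced by the difference quotients against the even remainder of the mixed-parity $W_{0,2}$). Neither version of that argument is carried out: you do not verify that the residue at $z=0$ of the kernel expression reproduces the $\left(x_i\frac{\p}{\p x_i}+\frac12\right)\frac{W_{g,m}(\dots,x,\dots)-W_{g,m}(\dots)}{x-x_i}$ terms, nor that the subtracted, $x$-independent term $W_{g,m}(x_1,\dots,x_m)$ drops out of the final answer, nor that the inductive parity hypothesis (each stable $W_{g,n}$ odd under every sheet exchange) actually propagates through a right-hand side that mixes odd inputs with the even part of $W_{0,2}$ and with the $\delta_{g,1}\frac{1}{16x}$ source term. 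Since that verification \emph{is} the content of the conjecture, the proposal as written establishes the base cases and the framework but not the statement. To close it you would need either the explicit term-by-term residue computation at $z=0$, or a self-contained induction proving simultaneously: (i) oddness of $W_{g,n}$ in each variable, (ii) vanishing of $W_{g,n}$ to sufficient order at $x_i=0$ and $x_i=\infty$, and (iii) rationality in $x_i$ after extracting the odd half-integer power of $1+\frac{S^2}{4x_i}$; only the conjunction of (i)--(iii) forces the poles to concentrate at $z_j=0$ and yields the polynomiality of $z_1^2\cdots z_n^2\,\omega_{g,n}/(dz_1\cdots dz_n)$ in the $z_j^{-1}$.
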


It should be simple to prove this conjecture using the Chekhov--Eynard--Orantin topological recursion methods, which are beyond the scope of this paper.

In the limit of $S=0$ the correlation functions $W_{g,n}$ coincide with the original BGW model. In this case all $W_{g,n}$ are polynomials in $x_j^{-1}$, see \cite{IMMM3}.

\subsection{Genus zero contribution}

Formulas (\ref{F01})-(\ref{F09}) as well as higher terms indicate that the coefficients of the expansion of genus zero free energy are quite simple.
\begin{conjecture}
\begin{multline}
\left[T_3^{j_1}T_5^{j_2}T_7^{j_3}\dots\right] \tilde{\mathcal F}_0({\bf T},S)\\
=\frac{(-1)^{m+1}\,(3j_1+5j_3+7j_3+\dots-1)!}{2^m\,(2m+2)!}\frac{(3!!)^{j_1}\,(5!!)^{j_2}\,(7!!)^{j_3}\dots}{(1!)^{j_1}\,j_1!\,(2!)^{j_2} \,j_2! \,(3!)^{j_3}\,j_3!\,\dots}\,S^{2m+2},
\end{multline}
where
\be
m=j_1+2j_2+3j_3+\dots.
\ee
\end{conjecture}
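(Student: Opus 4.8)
The plan is to compute $\tilde{\mathcal F}_0$ in closed form from the genus-zero part of the Virasoro constraints and then read off the coefficient of an arbitrary monomial $T_3^{j_1}T_5^{j_2}\cdots$. First I would pass to the dispersionless limit of (\ref{KdVvirN}): writing $\tau_N=\exp\!\big(\sum_g\hbar^{2g-2}\mathcal F_g\big)$ after the rescaling $t_k\mapsto\hbar^{-1}t_k$, the leading order turns the second-order operator into a product of first derivatives and retains the $S^2$ source, so that $F:=\mathcal F_0$ obeys the closed quasi-classical system
\[
\frac{\partial F}{\partial t_{2m+1}}=\frac12\sum_{k\ge0}(2k+1)t_{2k+1}\frac{\partial F}{\partial t_{2k+2m+1}}+\frac14\sum_{a+b=m-1}\frac{\partial F}{\partial t_{2a+1}}\frac{\partial F}{\partial t_{2b+1}}-\frac{S^2}{4}\delta_{m,0}.
\]
Introducing the full resolvent $\mathcal W(x)=\sum_{k\ge0}x^{-k-1}\partial_{t_{2k+1}}F$ and assembling these constraints into a generating series in $x$ yields a single quadratic (loop) equation for $\mathcal W(x)$ whose coefficients depend on the times; at $\mathbf t=0$ it degenerates to $\mathcal W(x)^2-4\mathcal W(x)-S^2/x=0$, i.e.\ the spectral curve (\ref{gensc}) with the rational uniformization (\ref{ratp}). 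Solving the dressed quadratic gives $\mathcal W$, hence all first derivatives of $F$, in closed form on the curve.

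Next I would integrate this data back to $F$ and isolate the $t_1$-dependence. The homogeneity of $\mathcal F_g$ forces $t_1$ to enter only through the combinations $2-t_1$ and an explicit logarithm, reproducing the decomposition $\mathcal F_0=\tilde{\mathcal F}_0(\mathbf T,S)+\frac{S^2}{2}\log(1-t_1/2)$ already recorded in the text, with $T_k=t_k/(2-t_1)^k$ and $\tilde{\mathcal F}_0(\mathbf 0,S)=0$. On the uniformized curve the shift of the potential by the times appears precisely through the series
\[
\Psi(q)=\sum_{k\ge1}\frac{(2k+1)!!}{k!}\,T_{2k+1}\,q^k,
\]
the double factorials being inherited from the expansion $W_{0,1}(x)=2\big(1-\sqrt{1+S^2/4x}\big)$ and the factor $(z-2)^{-2}$ in (\ref{ratp}). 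The outcome I would aim to establish is the residue/Lagrange-inversion formula
\[
\tilde{\mathcal F}_0(\mathbf T,S)=\sum_{m\ge1}\frac{(-1)^{m+1}S^{2m+2}}{2^m(2m+2)!}\,(2m-1)!\,\big[q^m\big](1-\Psi(q))^{-2m}.
\]

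Finally I would match this against the conjecture. Writing $(2m+J-1)!=\int_0^\infty s^{2m+J-1}e^{-s}\,ds$ with $J=\sum_k j_k$ and $m=\sum_k k\,j_k$ (so that $3j_1+5j_2+\cdots=2m+J$), the multinomial sum over all exponent patterns with fixed $m$ exponentiates:
\[
\sum_{\sum_k k j_k=m}(2m+J-1)!\prod_k\frac{1}{j_k!}\Big(\frac{(2k+1)!!}{k!}T_{2k+1}\Big)^{j_k}=(2m-1)!\,\big[q^m\big](1-\Psi(q))^{-2m},
\]
so the conjectured coefficient formula is exactly equivalent to the closed form displayed above, and the whole statement reduces to proving that closed form. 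I expect the main obstacle to lie in Steps 1--2: fixing the precise time-dressing of the spectral curve from the constraints and carrying out the Lagrange inversion that reproduces the global prefactor $(2m-1)!/\big(2^m(2m+2)!\big)$ together with the sign $(-1)^{m+1}$. The per-variable double-factorial combinatorics follows almost automatically from the exponential structure of $e^{s\Psi}$, but the overall normalization and the factor $1/(2m+2)!$ require the exact solution of the dispersionless system, not merely its recursive consequences, and controlling the logarithmic subtraction at $T_{2k+1}=0$ is delicate. An alternative endgame, avoiding the inversion entirely, is to verify directly that the closed form satisfies every genus-zero constraint in (\ref{KdVvirN}) together with the normalization (\ref{normc}), and then invoke the uniqueness of the genus-zero free energy established above.
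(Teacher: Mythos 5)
You should first be aware that the paper does not prove this statement: it is presented as a Conjecture, supported only by the explicit expressions $\tilde{\mathcal F}_0^{(k)}({\bf T})$ computed from the cut-and-join recursion (\ref{recurN}) up to high order, by consistency with the low correlation functions of Section \ref{cor}, and by the author's remark that a proof "probably can be" extracted from a Baker--Campbell--Hausdorff analysis of (\ref{BCH}). So there is no paper proof to match yours against; the relevant question is only whether your argument closes the gap. It does not, and you say so yourself: the entire content of your Steps 1--2 --- deriving the time-dressed genus-zero loop equation from the dispersionless constraints, solving it on the uniformized curve, and performing the Lagrange inversion that produces the global prefactor $(2m-1)!/\bigl(2^m(2m+2)!\bigr)$ and the sign $(-1)^{m+1}$ --- is stated as a target ("the outcome I would aim to establish"), not carried out. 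Those are precisely the nontrivial data; everything else is bookkeeping. As written, your text is a correct reduction of the conjecture to an equivalent closed-form statement together with a plausible plan of attack, which is genuinely more structure than the paper offers, but it is not a proof.

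What you do establish is worth recording: your final combinatorial identity is correct. Writing $J=\sum_k j_k$ and $m=\sum_k k\,j_k$, so that $3j_1+5j_2+\cdots=2m+J$, the multinomial expansion of $\Psi(q)^J$ together with $\bigl[\Psi^J\bigr](1-\Psi)^{-2m}=\binom{2m+J-1}{J}$ gives
\[
(2m-1)!\,\bigl[q^m\bigr](1-\Psi(q))^{-2m}
=\sum_{\sum_k k j_k=m}(2m+J-1)!\,\prod_k\frac{1}{j_k!}\Bigl(\frac{(2k+1)!!}{k!}\,T_{2k+1}\Bigr)^{j_k},
\]
so the conjectured coefficients are exactly equivalent to your resummed expression; I checked that this reproduces $\tilde{\mathcal F}_0^{(2)}=\tfrac18 T_3$ and $\tilde{\mathcal F}_0^{(3)}=\tfrac{3}{16}T_3^2+\tfrac1{16}T_5$ from Appendix \ref{C}, and your dispersionless form of (\ref{KdVvirN}) and the value of $W_{0,1}$ at ${\bf t}=0$ agree with the paper. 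The remaining obstacle is real: the loop equation (\ref{Weq}) as derived in the paper is an equation for the correlation functions at ${\bf t}=0$, so to execute your plan you must either rederive the constraints at generic times and control how the moment variables $T_k$ dress the branch point, or obtain a closed formula for all $W_{0,n}$ and resum over $n$ --- neither of which is sketched beyond naming $\Psi$. Your suggested alternative endgame (verify the closed form against every genus-zero constraint and invoke uniqueness) is probably the cleanest way to finish, but it too is unexecuted.
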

From this conjecture and from the definition of the variables $T_k$ it immediately follows that
\be
{\mathcal F}_0({\bf t},S)=\sum_{\substack{j_0,j_1,j_2,\dots\\ j_0+j_1+\dots>0}}A(j_0,j_1,j_2,\dots)\,\frac{(-1)^{m+1}S^{2m+2}\,(j_0+3j_1+5j_2+\dots-1)!}{2^{m+j_0+3j_1+5j_2+\dots}\,(2m+2)!}\,t_1^{j_0}t_3^{j_1}t_5^{j_2}\dots,
\ee
where
\be
A(j_0,j_1,j_2,\dots)=\frac{(1!!)^{j_0}\,(3!!)^{j_1}\,(5!!)^{j_2}\dots}{(0!)^{j_0}\,j_0!\,(1!)^{j_1}\,j_1!\,(2!)^{j_2}\,j_2!\dots}.
\ee
This expression is consistent with expressions for the correlation functions obtained in Section \ref{cor}.
It should help to identify the coefficients of the generalized BGW model with the enumerative geometry invariants. This conjecture probably can be proved with the help of the Baker--Campbell--Hausdorff analysis of (\ref{BCH}).

Let us give for comparison an expression for the genus zero free energy of the Kontsevich--Witten tau-function
\be
{\mathcal F}_0^{KW}({\bf t},S)=\sum_{j_0,j_1,j_2,\dots}A(j_0,j_1,j_2,\dots)\,m!\,\delta(-j_0+j_2+2j_3+\dots+3)\,t_1^{j_0}t_3^{j_1}t_5^{j_2}\dots.
\ee

\subsection{Polynomial tau-functions of KdV hierarchy}\label{Polysec}

It appears that for $N-\frac{1}{2} \in \mathbb Z$ the generalized BGW tau-function is a polynomial in times. From (\ref{Nsym}) it follows that it is enough to consider only positive values of $N$. In this section we assume that 
\be
l=N-\frac{1}{2}\in {\mathbb N}_0.
\ee
Then we have
\begin{theorem}
For the half-integer value of $N$ the generalized BGW tau-function is polynomial in times. Moreover, up to the dilaton shift of the times, it is equal to the the Schur function corresponding to the triangular partition of $\frac{l(l+1)}{2}$,
\be
\lambda(l)=\left(l,l-1,l-2,\dots,1\right).
\ee
Namely
\be\label{prop}
\tau_{l+\frac{1}{2}}({\bf t})=C_l\, s_{\lambda(l)}(\tilde{\bf t})
\ee
where the dilaton shift is given by (\ref{dilsh}) and
\be
C_l = \frac{(-\hbar)^{\frac{l(l+1)}{2}}}{2^{l^2}}\prod_{k=1}^l\frac{(2l-2k+1)!}{(l-k)!}.
\ee
\end{theorem}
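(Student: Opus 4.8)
The plan is to exhibit the claimed right-hand side and identify it as the unique KdV tau-function singled out by the string equation, invoking the uniqueness lemma stated just above. I would set $\tau:=C_l\,s_{\lambda(l)}(\tilde{\bf t})$ and check three properties: that $\tau$ is a tau-function of the KdV hierarchy, that it satisfies the string equation $\hbar\widehat{\mathcal L}_0^{(N)}\tau=\p_{t_1}\tau$ with the correct $\mu_0=\tfrac1{16}-\tfrac{N^2}4$, and that $\tau({\bf 0})=1$. Since $\tau_N$ is the unique object with these properties, equality follows, and polynomiality of $\tau_N$ is then automatic. (The candidate staircase shape can be guessed from the low orders of the recursion (\ref{recurN}), or motivated by noting that for $N=l+\tfrac12$ the argument $j-N-1$ in (\ref{phiser}) is a half-integer, so $a_k(j-N)$ vanishes for large $k$; each basis vector $\Phi^{(N)}_j$ is then a Laurent polynomial, which by itself already shows $\tau_N$ is polynomial.)

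First I would verify that $s_{\lambda(l)}(\tilde{\bf t})$ is a KdV tau-function. Every Schur function is a KP tau-function, and the dilaton shift (\ref{dilsh}) is a time translation, which preserves the KP tau-function property. For the reduction to KdV I would use the Maya-diagram description of the Grassmannian point of $s_{\lambda(l)}$: the staircase corresponds to the charge set $\{l-1,l-3,\dots,-(l-1)\}\cup\{-(l+1),-(l+2),\dots\}$, which is closed under subtracting $2$. Hence the point is invariant under multiplication by $z^{-2}$; equivalently $s_{\lambda(l)}$ is independent of the even times, which is exactly the $2$-reduction, and the odd-time shift (\ref{dilsh}) does not disturb this.

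The heart of the argument is the string equation, where the computation is short and simultaneously fixes the size of the partition and the value of $N$. Since $\widehat{\mathcal L}_0^{(N)}=\tfrac12\widehat D+\mu_0$ (the double-derivative sum is empty for $m=0$), and since even-time independence gives $\widehat D\,s_{\lambda(l)}=|\lambda(l)|\,s_{\lambda(l)}$ with $|\lambda(l)|=\tfrac{l(l+1)}2$, the shift identity $\widehat D\big[s(\tilde{\bf t})\big]=|\lambda(l)|\,s(\tilde{\bf t})+\tfrac2\hbar\big(\p_{t_1}s\big)(\tilde{\bf t})$ makes the $\p_{t_1}$-contributions on the two sides of the string equation cancel, leaving the single scalar condition $\tfrac12|\lambda(l)|+\mu_0=0$. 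With $\mu_0=\tfrac1{16}-\tfrac{N^2}4$ and $N=l+\tfrac12$ this is $N^2=\tfrac14+l(l+1)=(l+\tfrac12)^2$, an identity. This is the step that forces the triangular shape: among $2$-reduced Schur functions, only the one of size $\tfrac{l(l+1)}2$ can solve the string equation for this $N$.

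Finally I would fix $C_l$ from $\tau({\bf 0})=1$. At ${\bf t}=0$ the shifted times become $p_1=-2/\hbar$, $p_{\ge2}=0$, so the exponential specialization gives $s_{\lambda(l)}\big|_{{\bf t}=0}=\tfrac{f^{\lambda(l)}}{|\lambda(l)|!}(-2/\hbar)^{|\lambda(l)|}$, where $f^{\lambda(l)}$ is the number of standard Young tableaux of shape $\lambda(l)$; hence $C_l=\tfrac{|\lambda(l)|!}{f^{\lambda(l)}}(-\hbar/2)^{|\lambda(l)|}$. For the staircase the hook lengths are the odd numbers $2k-1$ with multiplicity $l-k+1$, so $\prod_{\rm cells}h=\prod_{k=1}^{l}(2k-1)^{\,l-k+1}$, and substituting $f^{\lambda(l)}=|\lambda(l)|!/\prod_{\rm cells}h$ and simplifying reproduces the stated $C_l$. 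The main obstacle is not the string equation, which is a two-line check, but the two bookkeeping steps: proving rigorously that the staircase Schur function is $2$-reduced (the Maya-diagram argument) and matching the hook-length product to the explicit constant; both are standard but must be executed carefully to land the exact formula.
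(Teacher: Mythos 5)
Your proposal is correct, but it runs the argument in the opposite direction from the paper. You take the staircase Schur function as a candidate, verify the three characterizing properties (KdV reduction via the Maya/charge-set description, the string equation, the normalization), and then invoke the uniqueness Lemma stated just before the Theorem; the paper instead \emph{derives} the answer from the Sato Grassmannian: it first observes that for $N=l+\tfrac12$ the series (\ref{phiser}) truncate, so the basis vectors are Laurent polynomials and $\tau_{l+\frac12}$ is a polynomial of degree $\tfrac{l(l+1)}{2}$, and then implements the dilaton shift as multiplication of the point ${\mathcal W}_{l+\frac12}$ by $e^{2\lambda/\hbar}$, using the Bessel integral representation to identify the shifted point as $\left<\lambda^{-l},\lambda^{2-l},\dots,\lambda^{l-2},\lambda^{l},\lambda^{l+1},\dots\right>$ --- exactly the exponent set, closed under multiplication by $\lambda^2$, that your charge-set argument encodes, so the two proofs meet at the same combinatorial object from opposite ends. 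Your key computations check out: on even-time-independent functions $\widehat{\mathcal L}_0^{(N)}=\tfrac12\widehat D+\mu_0$, the $\p/\p t_1$ contributions cancel against the shift, and $\tfrac12\cdot\tfrac{l(l+1)}{2}+\tfrac1{16}-\tfrac{N^2}{4}=0$ is an identity for $N=l+\tfrac12$; likewise $C_l=\bigl(\prod_{\mathrm{cells}}h\bigr)(-\hbar/2)^{l(l+1)/2}$ with $\prod_{\mathrm{cells}}h=\prod_{k=1}^{l}(2k-1)^{l-k+1}$ reproduces the stated constant, which the paper reaches equivalently through $s_{\lambda}(t_k=\delta_{k,1})$. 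What your route buys is a two-line verification once the candidate is known; what it costs is that the triangular shape must be supplied from outside (low-order data or the Jimbo--Miwa classification), and the argument leans both on the uniqueness Lemma, whose proof the paper only sketches from the Kac--Schwarz description, and on the classical fact that staircase Schur functions are $2$-reduced --- a fact the paper's Grassmannian computation establishes independently rather than importing.
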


\begin{proof}

In this case all sums in the expressions for the basic vectors (\ref{phiser}) have only a finite numbers of terms:
\be
\Phi^{(l+\frac{1}{2})}_j(\lambda)=
\begin{cases}
\displaystyle{\lambda^{j-1}+\sum_{k=1}^{l-j+1}(-\hbar)^k\frac{a_k(j-l-\frac{1}{2})}{16^k\, k!}\,\lambda^{j-k-1} \,\,\,\,\,\,\,\,\,\,\,\,\,\,\,\,\,\,\, \mathrm{for} \quad j\leq l+1},\\[10pt]
\displaystyle{ \lambda^{j-1}+\sum_{k=1}^{j-l-2}(-\hbar)^k\frac{a_k(j-l-\frac{1}{2})}{16^k\, k!}\,\lambda^{j-k-1}\,\,\,\,\,\,\,\,\,\,\,\,\,\,\,\,\,\,\, \mathrm{for} \quad j>l+1}.
\end{cases}
\ee
Thus, $\Phi_j^{(l+1/2)}$ is not polynomial (contains negative powers of $\lambda$) only for $j<\frac{l}{2}+1$,
and in this case the most singular term is proportional to $\lambda^{2j-l-2}$. 
Moreover,
\be\label{boll}
\Phi^{(l+\frac{1}{2})}_{l+1}(\lambda)=\lambda^{l},\\
\Phi^{(l+\frac{1}{2})}_{l+2}(\lambda)=\lambda^{l+1},
\ee
thus, from (\ref{KS}) we see that 
\be\label{boll1}
\lambda^{k} \in {\mathcal W}_{l+\frac{1}{2}} \quad \mathrm{for}\quad k\geq l.
\ee

Therefore, for any $M \geq l$ a ratio of determinants
\be
\tau_{l+\frac{1}{2}}(\Lambda)=\frac{\det_{i,j=1}^M{\left(\Phi^{(l+\frac{1}{2})}_{j}(\lambda_i)\right)}}{\Delta(\lambda)}
\ee
is a symmetric polynomial (not homogeneous!) in the eigenvalues $\lambda_j^{-1}$ of total degree $\frac{l(l+1)}{2}$. It means that if we put $\deg t_k=k$, $\tau_{l+\frac{1}{2}}$ is a polynomial in times $t_k$ of degree  $\frac{l(l+1)}{2}$, for example
\be
\tau_{\frac{1}{2}}=1,\\
\tau_{\frac{3}{2}}=1-\hbar \frac{t_1}{2}=-\frac{\hbar}{2}\tilde{t_1},\\
\tau_{\frac{5}{2}}=1-\frac{3}{2}\hbar\,t_{{1}}+\frac{3}{4}\,\hbar^2\,{t_{{1}}}^{2}+\frac{3}{8}\hbar^3\,t_{{3}}-\frac{1}{8}\hbar^3\,{t_{{1}}}^{3}=\hbar^3\left(\frac{3}{8}\,\tilde{t}_{{3}}-\frac{1}{8}\,{\tilde{t}_{{1}}}^{3}\right).
\ee

 Let us prove, that the tau-unctions $\tau_{l+\frac{1}{2}}$ actually coincide (up to a constant normalization) with the Schur functions. The shift of the times in the tau-functions corresponds to the action of the multiplication operator on the Sato Grassmannian. Namely, if a given tau-function $\tau({\bf t})$ is described by the point $\mathcal{W}$ of the Sato Grassmannian, the tau-function
\be
\tilde{\tau}(\bf t)=\tau({\bf t+a})
\ee
corresponds to the point of the Sato Grassmannian, specified by
\be\label{tildbv}
\widetilde{\mathcal W} =e^{\sum \frac{a_k \lambda^k}{k}}{\mathcal W}.
\ee

In particular, to get rid of the dilaton shift (\ref{dilsh}) we introduce 
\be
\tilde{\tau}_{l+\frac{1}{2}}({\bf t})={\tau}_{l+\frac{1}{2}}\left({\bf t}+\frac{2}{\hbar}\delta_{k,1}\right),
\ee
which corresponds to the point of the point of Sato Grassmannian
\be
\widetilde{\mathcal W}_{l+\frac{1}{2}}=e^\frac{2\lambda}{\hbar}\,{\mathcal W}_{l+\frac{1}{2}}.
\ee
Let us show that
\be\label{tildsg}
\lambda^{2k-2-l}\in \widetilde{\mathcal W}_{l+\frac{1}{2}} \quad \mathrm{for}\quad l\geq k>0.
\ee
Indeed, from (\ref{tildbv}) it follows that
\be\label{tildbild}
e^\frac{2\lambda}{\hbar} \sum_{j=1}^\infty \alpha_j \Phi^{(l+\frac{1}{2})}_j(\lambda)\,\in\, \widetilde{\mathcal W}_{l+\frac{1}{2}}
\ee
for any constants $\alpha_j$. In particular, from (\ref{Phi0}) and (\ref{PhiN}) it follows that if we choose these constants such that $\sum_{j=1}^\infty \alpha_j \hbar^j t^{1-j} =\exp(t^{-1})$, then
(\ref{tildbild}) reduces to 
\be
\lambda^{l+1}\int_\gamma e^{\tilde\lambda^2 t} \frac{dt}{t^{-l+\frac{1}{2}}}\in\, \widetilde{\mathcal W}_{l+\frac{1}{2}}.
\ee 
Since
\be
\int_\gamma e^{\tilde\lambda^2 t} \frac{dt}{t^{-l+\frac{1}{2}}}\sim \lambda^{-2l-1}
\ee
it proves (\ref{tildsg}) for $k=1$. To prove (\ref{tildsg}) for $l\geq k>1$ one have just to choose other values for the constants $\alpha_j$.

Thus
\be
\widetilde{\mathcal W}_{l+\frac{1}{2}}=\left<\lambda^{-l},\lambda^{2-l},\lambda^{4-l},\dots,\lambda^{l-4},\lambda^{l-2},\lambda^{l},\lambda^{l+1},\dots\right>
\ee
It does not belong to the big cell of the Sato Grassmannian, and the KdV tau-function is (up to  a constant factor) the Schur function, corresponding to the triangular Young tableau with
\be\label{lambdal}
\lambda(l)=\left(l,l-1,l-2,\dots,1\right).
\ee
These KdV tau-functions were described already in \cite{JimbM}. 
The constant $C_l$ in (\ref{prop}) can be easily found from the comparison of the r.h.s. and the l.h.s. for ${\bf t}=0$. Namely,
\be
s_\lambda({\bf t})\Big|_{t_k=0,k>1}= s_{\lambda}(t_k=\delta_{k,1})\, t_1^{|\lambda|},
\ee
thus
\be
C_l = \frac{\hbar^ \frac{l(l+1)}{2}}{(-2)^{\frac{l(l+1)}{2}}s_{\lambda(l)}(t_k=\delta_{k,1})}=\frac{(-\hbar)^{\frac{l(l+1)}{2}}}{2^{l^2}}\prod_{k=1}^l\frac{(2l-2k+1)!}{(l-k)!}.
\ee
\end{proof}
We have a corollary 
\begin{lemma}
The tau-function of the KdV hierarchy given by the Schur function
\be
\tau({\bf t})=s_{\lambda(l)}({\bf t}),
\ee
where the partition is given by (\ref{lambdal}),
is uniquely (up to normalization) specified by the Virasoro constraints
\be
 \widehat{\mathcal L}_m\, \tau({\bf t})=0, \,\,\,\,\, m\geq 0,
\ee
where
\be
\widehat{\mathcal L}_m= \frac{1}{2}\sum_{k=0}^\infty (2k+1) \tilde{t}_{2k+1} \frac{\p}{\p t_{2k+2m+1}}+\frac{1}{4} \sum_{a+b=m-1} \frac{\p^2}{\p t_{2a+1} \p t_{2b+1}}-\frac{l(l+1)}{4}\delta_{m,0}.
\ee
\end{lemma}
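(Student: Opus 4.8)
The statement is a specialization of the Theorem combined with the uniqueness already established for the generalized BGW tau-function, so the plan is to set $N=l+\tfrac12$ in those results and carry everything through the dilaton shift (\ref{dilsh}). First I would match the constant: for $N=l+\tfrac12$ the string-equation constant of (\ref{KdVvirN}) is
\[
\mu_0=\frac{1}{16}-\frac{N^2}{4}=-\frac{l(l+1)}{4},
\]
which is exactly the constant entering $\widehat{\mathcal L}_m$ in the statement. Thus the operators $\widehat{\mathcal L}_m$ of the Lemma are precisely the $N=l+\tfrac12$ specializations of the constraint operators of (\ref{KdVvirN}).

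Second, I would rewrite the inhomogeneous system (\ref{KdVvirN}) in homogeneous form. The key observation is that the term $-\,\p/\p t_{2m+1}$ on the right of (\ref{KdVvirN}) is exactly the $k=0$ summand of $\tfrac{\hbar}{2}\sum_{k\ge0}(2k+1)t_{2k+1}\,\p/\p t_{2k+2m+1}$ produced by the replacement $t_1\mapsto t_1-\tfrac{2}{\hbar}$; that is, absorbing the right-hand derivative into the first sum is the same as performing the dilaton shift (\ref{dilsh}) on $t_1$. Hence, for $N=l+\tfrac12$, the constraints (\ref{KdVvirN}) are equivalent to the homogeneous constraints $\widehat{\mathcal L}_m\,\tau=0$, $m\ge0$, with the constant $-l(l+1)/4$.

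Third, I would invoke the Theorem, $\tau_{l+1/2}({\bf t})=C_l\,s_{\lambda(l)}(\tilde{\bf t})$. Since the dilaton shift is an invertible, constant linear change of variables under which every $\p/\p t_k$ is unchanged, undoing it simultaneously replaces the shifted argument $\tilde{\bf t}$ of the Schur function by the bare ${\bf t}$ and transports the times appearing in the first sum of $\widehat{\mathcal L}_m$ accordingly, so that $s_{\lambda(l)}$ solves the displayed constraints. This is the one place where real care is needed: the shift hidden in the argument of $s_{\lambda(l)}$ and the shift sitting inside the constraint operator must be undone in tandem, and checking that they are the \emph{same} shift (so that the constant $-l(l+1)/4$ survives) is the substantive bookkeeping step. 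A single low case, e.g.\ $\lambda(1)=(1)$ with $s_{(1)}=t_1$, serves as a useful consistency check on the constant and on the alignment of the two shifts.

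Finally, uniqueness is inherited rather than reproved. The Lemma accompanying (\ref{Gencaj}) and (\ref{recurN}) shows the solution of (\ref{KdVvirN}) is unique up to an overall constant, because the cut-and-join recursion (\ref{recurN}) reconstructs every coefficient from the normalization (\ref{normc}). Since the equivalence of the two systems and the dilaton shift are both invertible, the space of solutions of the displayed constraints is likewise one-dimensional, spanned by $s_{\lambda(l)}$; that this function is a genuine KdV (not merely KP) tau-function is part of the Theorem, whose proof exhibits the $2$-reduced point $\widetilde{\mathcal W}_{l+1/2}$ of the Sato Grassmannian. The main obstacle is therefore not conceptual but lies entirely in the shift bookkeeping of the second and third steps.
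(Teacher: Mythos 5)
Your overall route is exactly the paper's: there the Lemma is presented as an immediate corollary of the Theorem, resting on the three ingredients you list --- the evaluation $\mu_0=\frac{1}{16}-\frac{1}{4}\left(l+\frac{1}{2}\right)^2=-\frac{l(l+1)}{4}$, the absorption of the right-hand side of (\ref{KdVvirN}) into the dilaton shift of $t_1$, and uniqueness inherited from the cut-and-join recursion. Your steps 1, 2 and 4 are fine as written.

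The step you defer --- ``checking that the two shifts are the same'' --- is precisely where the argument must be finished, and finishing it does not reproduce the displayed statement verbatim. After your step 2 you have correctly shown that $\tau_{l+\frac{1}{2}}({\bf t})=C_l\,s_{\lambda(l)}(\tilde{\bf t})$ is annihilated by operators with $\tilde t_{2k+1}$ in the first sum, acting in the variables ${\bf t}$. Renaming $u_k=\tilde t_k$ (which leaves every $\p/\p t_k=\p/\p u_k$ unchanged) turns each $\tilde t_{2k+1}$ into a plain $u_{2k+1}$; what you actually obtain is that $s_{\lambda(l)}({\bf u})$ is annihilated by the operators with \emph{unshifted} times in the first sum and the same constant $-\frac{l(l+1)}{4}$. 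The pairing written in the Lemma --- unshifted argument of the Schur function together with $\tilde t$ inside $\widehat{\mathcal L}_m$ --- is the one combination that fails, and your own proposed sanity check exposes it: for $l=1$ one has $s_{(1)}({\bf t})=t_1$ and $\widehat{\mathcal L}_0\, t_1=\frac{1}{2}\tilde t_1-\frac{1}{2}t_1=-\frac{1}{\hbar}\neq 0$. So either the tilde in the displayed operator should be dropped, or the tau-function should be taken to be $s_{\lambda(l)}(\tilde{\bf t})$; your derivation proves both of these equivalent corrected statements (and the constant does survive, as you say). Apart from carrying out and recording this last piece of bookkeeping, nothing is missing.
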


Polynomially of the tau-function (\ref{prop}) means that in this case the expansion (\ref{sumhbar}) is finite, and
\be
\left(\widehat{W}_{l+\frac{1}{2}}\right)^{\frac{l(l+1)}{2}+1} \cdot 1=0.
\ee
Also, since
\be
B_k\left(l+\frac{1}{2}\right)=0,\,\,\,\,\,k>|l|,
\ee
the terms $\tau^{(k)}_N$, which are polynomials in $N$, are indeed divisive by $B_l(N)$ for $k\leq\frac{l(l+1)}{2}$.

Thus, it is natural to express the free energy in terms of only $B_m(N)$ and $T_k$.
Namely, for 
\be
\tau_N(\hbar^{-1}{\bf t},\hbar)=\exp\left({\mathcal F}({\bf t},N,\hbar)\right)
\ee
where we do not introduce the variable $S$, (\ref{Svar}), we have
\be
\mathcal{F}({\bf t},N,\hbar)=\frac{4N^2-1}{8}\log\left(1-\frac{t_1}{2}\right)+\sum_{k=2}\hbar^{2g-2}
F_g({\bf T},N)
\ee
where
\be
F_g({\bf T},N)=\sum_{k=2}^g B_k(N) F_{g,k}(\bf T).
\ee
Polynomials $F_{g,k}$ can be found using the cut-and-join operator or from (\ref{prop}). In Appendix \ref{D} we give expressions for $g\leq 6$.

\section{Concluding remarks}

In this paper we have investigated the generalized BGW tau-function.
Obtained results are not only interesting for the matrix model theory, but also should help to identify the generalized BGW tau-function with a generating function of some enumerative geometry invariants. A natural candidate would be a version of open $r=-2$ spin intersection numbers. However, probably this interpretation is too naive. One of the reasons is that the introduction of the variable $N$, which should add boundaries to the theory, is not accompanied by new variables for the descendants on the boundary (which appears in the Kontsevich-Penner model and constitute a second infinite set of times of the KP tau-function). 

The results also should help  to develop the theory of the Givental decomposition. The cut-and-join representation (in particular, its free field version) should allow us to represent the decomposition formulas purely in terms of simple exponential operators. Of course, the same analysis can be applied to other antipolynomial generalized Kontsevich models. 

This paper contains all necessary prerequisites for construction of the Chekhov--Eynard--Orantin topological recursion for the generalized BGW model, namely, the quantum and classical spectral curves, rational parametrization, wave function, one and two point correlation functions in genus zero and loop equations. 
It would be interesting to compere our results with the contour integral expressions for the $n$-point (all-genera) correlation functions obtained in \cite{BrezinHik} and with the recursion relations for the KdV hierarchy correlation functions from \cite{Dub}. It is also interesting to find the compact expressions for the higher genera contributions to the free energy. Some compact expressions for the higher genera contributions to the free energy in terms of the moments are given in \cite{ChekhovKW}, but their conclusions about the relation of this model with the Kontsevich-Witten tau-function and the structure of the Virasoro constraints look to be not completely consistent with our results. These topics are beyond the scope of the present paper and will be considered later.

\section*{Acknowledgments}
The author is grateful to L.~Chekhov and A.~Mironov for useful discussions. This paper is a first part of the project, initiated during   the OIST workshop on Moduli Space, Conformal Field Theory and Matrix Models 2015, and the author is grateful to the organizers of this workshop. This work was supported in part by IBS-R003-D1, by the Natural Sciences and Engineering Research Council of Canada (NSERC), by the Fonds de recherche du Qu\'ebec Nature et technologies (FRQNT) and by RFBR grants 15-01-04217 and 15-52-50041YaF. 

\newpage
\begin{appendices}\label{appp}

\section{Free energy of BGW tau-function}
\label{A}
\def\theequation{A\arabic{equation}}
\setcounter{equation}{0}

\be
{\mathcal F}_{2}={\frac {9}{128}}\,T_{{3}}
\ee
\be
{\mathcal F}_{3}={\frac {567}{1024}}\,{T_{{3}}}^{2}+{\frac {225}{1024}}\,T_{{5}}
\ee
\be
{\mathcal F}_4={\frac {64989}{4096}}\,{T_{{3}}}^{3}+{\frac {388125}{32768}}\,T_{{5}}T
_{{3}}+{\frac {55125}{32768}}\,T_{{7}}
\ee
\be
{\mathcal F}_5={\frac {70864875}{65536}}\,T_{{5}}{T_{{3}}}^{2}+{\frac {14123025}{
65536}}\,T_{{7}}T_{{3}}+{\frac {6251175}{262144}}\,T_{{9}}+{\frac {
130301217}{131072}}\,{T_{{3}}}^{4}+{\frac {28252125}{262144}}\,{T_{{5}
}}^{2}
\ee
\be
{\mathcal F}_6={\frac {286765250859}{2621440}}\,{T_{{3}}}^{5}+{\frac {2269176525}{
4194304}}\,T_{{11}}+{\frac {37656646875}{1048576}}\,T_{{3}}{T_{{5}}}^{
2}\\+{\frac {18826455375}{524288}}\,T_{{7}}{T_{{3}}}^{2}+{\frac {
25035955875}{4194304}}\,T_{{3}}T_{{9}}+{\frac {12519714375}{2097152}}
\,T_{{7}}T_{{5}}+{\frac {81770259375}{524288}}\,T_{{5}}{T_{{3}}}^{3}
\ee
\be
{\mathcal F}_7={\frac {27537582342375}{16777216}}\,T_{{9}}{T_{{3}}}^{2}+{\frac {
7852650127875}{33554432}}\,T_{{9}}T_{{5}}+{\frac {206914899886875}{
16777216}}\,{T_{{3}}}^{2}{T_{{5}}}^{2}\\
+{\frac {34484117212125}{4194304
}}\,T_{{7}}{T_{{3}}}^{3}+{\frac {34539827452875}{1048576}}\,T_{{5}}{T_
{{3}}}^{4}+{\frac {9180336943125}{16777216}}\,{T_{{5}}}^{3}\\
+{\frac {
1963178035875}{16777216}}\,{T_{{7}}}^{2}+{\frac {19600404065991}{
1048576}}\,{T_{{3}}}^{6}+{\frac {602628451425}{33554432}}\,T_{{13}}\\
+{
\frac {13769702800875}{4194304}}\,T_{{7}}T_{{5}}T_{{3}}+{\frac {
3925999556325}{16777216}}\,T_{{11}}T_{{3}}

\ee
\be
{\mathcal F}_8={\frac {169591162989488625}{67108864}}\,T_{{7}}{T_{{3}}}^{4}+{\frac {
26488676216338875}{2147483648}}\,T_{{11}}T_{{5}}\\
+{\frac {
26487328325483025}{2147483648}}\,T_{{13}}T_{{3}}+{\frac {
26488802802632625}{2147483648}}\,T_{{9}}T_{{7}}+{\frac {
75275622313203375}{134217728}}\,T_{{9}}{T_{{3}}}^{3}\\
+{\frac {
6632707287504375}{67108864}}\,T_{{3}}{T_{{7}}}^{2}+{\frac {
339191251470703125}{67108864}}\,{T_{{5}}}^{2}{T_{{3}}}^{3}+{\frac {
645210015875843625}{67108864}}\,T_{{5}}{T_{{3}}}^{5}\\
+{\frac {
13265904719221875}{134217728}}\,T_{{7}}{T_{{5}}}^{2}+{\frac {
13264818560895825}{134217728}}\,T_{{11}}{T_{{3}}}^{2}+{\frac {
75280511033859375}{134217728}}\,{T_{{5}}}^{3}T_{{3}}\\
+{\frac {
538246474955839917}{117440512}}\,{T_{{3}}}^{7}+{\frac {
1762688220418125}{2147483648}}\,T_{{15}}+{\frac {26530744498689375}{
134217728}}\,T_{{9}}T_{{5}}T_{{3}}\\
+{\frac {112917280552295625}{
67108864}}\,T_{{5}}{T_{{3}}}^{2}T_{{7}}
\ee
\be
{\mathcal F}_9=
{\frac {77431411127351806875}{1073741824}}\,{T_{{7}}}^{2}{T_{{3}}}^{2}
+{\frac {258385134479852784375}{536870912}}\,{T_{{5}}}^{3}{T_{{3}}}^{2
}\\
+{\frac {2715656473902641360625}{1073741824}}\,{T_{{5}}}^{2}{T_{{3}}}
^{4}+{\frac {258375760190755743375}{1073741824}}\,T_{{9}}{T_{{3}}}^{4}\\
+{\frac {32565363218292436875}{4294967296}}\,T_{{9}}{T_{{5}}}^{2}+{
\frac {1806818862379174875}{2147483648}}\,T_{{11}}T_{{7}}\\
+{\frac {
51619703670742474575}{1073741824}}\,T_{{11}}{T_{{3}}}^{3}+{\frac {
3613628364405184125}{4294967296}}\,T_{{13}}T_{{5}}\\
+{\frac {
8141354220180706875}{1073741824}}\,{T_{{7}}}^{2}T_{{5}}+{\frac {
16281810076944587175}{2147483648}}\,T_{{13}}{T_{{3}}}^{2}\\
+{\frac {
1806755425928578125}{2147483648}}\,T_{{15}}T_{{3}}+{\frac {
849028159501396875}{17179869184}}\,T_{{17}}\\
+{\frac {
309725220277322836875}{2147483648}}\,T_{{5}}{T_{{3}}}^{2}T_{{9}}+{
\frac {16282283116759356375}{1073741824}}\,T_{{9}}T_{{3}}T_{{7}}\\
+{
\frac {16282250837254450125}{1073741824}}\,T_{{11}}T_{{3}}T_{{5}}+{
\frac {154866013725681129375}{1073741824}}\,{T_{{5}}}^{2}T_{{3}}T_{{7}
}\\
+{\frac {129190312318002901875}{134217728}}\,T_{{5}}{T_{{3}}}^{3}T_{{
7}}+{\frac {103246107731883140625}{8589934592}}\,{T_{{5}}}^{4}\\
+{\frac 
{7227277861688852625}{17179869184}}\,{T_{{9}}}^{2}+{\frac {
996625993639547388375}{268435456}}\,T_{{5}}{T_{{3}}}^{6}\\
+{\frac {
271561567871335604625}{268435456}}\,T_{{7}}{T_{{3}}}^{5}
\ee

\section{Coefficients of generalized BGW tau-function}
\label{B}
\def\theequation{B\arabic{equation}}
\setcounter{equation}{0}

\be
\tau^{(1)}_N=\frac{B_1(N)}{2^4\, 1!}\,  t_{{1}}
\ee

\be
\tau^{(2)}_N =\frac{B_2(N)}{2^7\,2!}  t_1^2
\ee

\be
\tau^{(3)}_N ={\frac {B_2(N)}{2^{11}\,3!}}\,   \left(24\,t_{{3}} - 4\,{t_{{1}}}^{3}
{N}^{2}+17\,{t_{{1}}}^{3}\right) 
\ee

\be
\tau^{(4)}_N ={\frac {B_3(N)}{2^{16}\,4!}}\, \left(96\,t_{{3}}- 4\,{t_{{1}}}^{3}{N}^{2}+17\,{t_{{1}}}^{3} \right) 
t_{{1}} 
\ee

\be
\tau^{(5)}_N ={\frac { B_3(N)}{2^{20}\,5!}}\,  \left( 16\,{t_{{1}}}^{5}{N}^{4}-200\,
{t_{{1}}}^{5}{N}^{2}-960\,{t_{{1}}}^{2}t_{{3}}{N}^{2}+3840\,t_{{5}}+
7920\,{t_{{1}}}^{2}t_{{3}}+561\,{t_{{1}}}^{5} \right) 
\ee

\be
\tau^{(6)}_N ={\frac {B_3(N)}{2^{24}\,6!}}\, \left(7680\,{t_{{1}}}^{3}t_{{3}}{N}^{4} -64\,{N}
^{6}{t_{{1}}}^{6}+1456\,{t_{{1}}}^{6
}{N}^{4}-142080\,{t_{{1}}}^{3}t_{{3}}{N}^{2}+23001\,{t_{{1}}}^{6}\right.\\
\left.-92160\,t_{{1}}t_{{5}}{N}^
{2}-10444\,{t_{{1}}}^{6}{N}^{2}-23040\,{t_{{3}}}^{2}{N}^{2}+649440\,{t
_{{1}}}^{3}t_{{3}}+944640\,t_{{1}}t_{{5}}+466560
\,{t_{{3}}}^{2} \right)  
\ee

\be
\tau^{(7)}_N ={\frac {B_4(N)}{2^{28}\,7!}}\,  \left( 1612800\,t_{{7}}-10444\,{t_{{1}}}^{7}{N}^{2}-64\,{N}^{6}{t_{{1}}}^{7}+23001\,{t_{{1}}}
^{7}\right.\\
+1456\,{t_{{1}}}^{7}{N}^{4}+13440\,{t_{{1}}}^{4}t_{{3}}{N}^{4}-
248640\,{t_{{1}}}^{4}t_{{3}}{N}^{2}+1136520\,{t_{{1}}}^{4}t_{{3}}-
322560\,{t_{{1}}}^{2}t_{{5}}{N}^{2}\\
\left.+3306240\,{t_{{1}}}^{2}t_{{5}}+
3265920\,t_{{1}}{t_{{3}}}^{2}-161280\,t_{{1}}{t_{{3}}}^{2}{N}^{2}
 \right)  
\ee

\be
\tau^{(8)}_N ={\frac {B_4(N)}{2^{32}\,8!}}\, \left(1311057\,{t_{{1}}}^{8} -687312\,{t_{{1}}}^{8}{N}^{2}+124768\,{t_{{1}}}
^{8}{N}^{4}+256\,{N}^{8}{t_{{1}}}^{8}\right.\\
-86016
\,{t_{{1}}}^{5}{N}^{6}t_{{3}}+2817024\,{t_{{1}}}^{5}t_{{3}}{N}^{4}-
29949696\,{t_{{1}}}^{5}t_{{3}}{N}^{2}+103650624\,{t_{{1}}}^{5}t_{{3}}\\
+
502548480\,{t_{{1}}}^{3}t_{{5}}+3440640\,{t_{{1}}}^{3}t_{{5}}{N}^{4}-
84295680\,{t_{{1}}}^{3}t_{{5}}{N}^{2}+2580480\,{t_{{1}}}^{2}{t_{{3}}}^
{2}{N}^{4}\\
+744629760\,{t_{{1}}}^{2}{t_{{3}}}^{2}-89026560\,{t_{{1}}}^{
2}{t_{{3}}}^{2}{N}^{2}-51609600\,t_{{1}}t_{{7}}{N}^{2}\\
\left.+735436800\,t_{{
1}}t_{{7}}+727695360\,t_{{5}}t_{{3}}-20643840\,t_{{5}}t_{{3}}{N}^{2}
-9472\,{N}^{6}{t_{{1}}}^{8}
\right)
\ee

\be
\tau^{(9)}_N ={\frac {B_4(N)}{2^{36}\,9!}}\,\left(105345515520\,t_{{9}}-28333670400\,{t_{{1}}}^{2}t_{{7}}{N}^{2}+
928972800\,{t_{{1}}}^{2}t_{{7}}{N}^{4}\right.\\
-26295736320\,{t_{{1}}}^{3}{t_{{
3}}}^{2}{N}^{2}+1571512320\,{t_{{1}}}^{3}{t_{{3}}}^{2}{N}^{4}-30965760
\,{t_{{1}}}^{3}{t_{{3}}}^{2}{N}^{6}-1024\,{N}^{10}{t_{{1}}}^{9}\\
+61931520\,{t_
{{3}}}^{3}{N}^{4}+1261854720\,{t_{{1}}}^{4}t_{{5}}{N}^{4}-30965760\,{t_{{1}}}^{4}
t_{{5}}{N}^{6}+743178240\,t_{{1}}t_{{5}}t_{
{3}}{N}^{4}\\
+454358016\,{t_{
{1}}}^{6}t_{{3}}{N}^{4}-25288704\,{t_{{1}}}^{6}{N}^{6}t_{{3}}+516096\,
{N}^{8}{t_{{1}}}^{6}t_{{3}}+425701785600\,t_{{1}}t_{{5}}t_{{3}}\\
-38273679360\,t_{{1}}t_{{5}}t_{{3}}{N}^{2}-3541999104\,{t_{{1}}}^{6}t_{{3}}{N}^{2}+215115264000\,{t_{{1}}}^{2}t_{{7}}\\
+10859168\,{t_{{1}}}^{9}{N}^{4}-1114752\,{N}^{6}{t_{{1}}}^{9
}+10105935840\,{t_{{1}}}^{6}t_{{3}}+73497715200\,{t_{{1}}}^
{4}t_{{5}}\\
-49919508\,{t_{{1}}}^{9}{N}^{2}+145202803200\,{t_
{{1}}}^{3}{t_{{3}}}^{2}-3622993920\,{t_{{3}}}^{3}{N}^{2}+54528\,{N}^{8}{t_{{1}}}^{9}\\
\left.-5202247680\,t_{{9}}{N}^{2}+85218705\,{t_{{1}}}^{9}+
70265180160\,{t_{{3}}}^{3}-16851179520\,{t_{{1}}}^{4}t_{{5}}{
N}^{2}
\right)
\ee

\be
\tau^{(10)}_N ={\frac {B_4(N)}{2^{40}\,10!}}\, \left(1036733644800\,{t_
{{1}}}^{2}t_{{5}}t_{{3}}{N}^{4}-22483928678400\,{t_{{1}}}^{2}t_{{5}}t_{{3}}{N}^{2}\right.\\
-14863564800\,{t_{{1}}}^{2}t_{{5}}t_{{3
}}{N}^{6}-9762729984000\,{t_{{1}}}^{3}t_{{7}}{N}^{2}+603832320000\,{t_
{{1}}}^{3}t_{{7}}{N}^{4}\\
-12386304000\,{t_{{1}}}^{3}t_{{7}}{N}^{6}-
6250999910400\,{t_{{1}}}^{4}{t_{{3}}}^{2}{N}^{2}+549758361600\,{t_{{1}
}}^{4}{t_{{3}}}^{2}{N}^{4}\\
-21366374400\,{t_{{1}}}^{4}{t_{{3}}}^{2}{N}^
{6}+309657600\,{t_{{1}}}^{4}{N}^{8}{t_{{3}}}^{2}-3048253931520\,{t_{{1
}}}^{5}t_{{5}}{N}^{2}\\
+319040225280\,{t_{{1}}}^{5}t_{{5}}{N}^{4}-
14615838720\,{t_{{1}}}^{5}t_{{5}}{N}^{6}+247726080\,{t_{{1}}}^{5}{N}^{
8}t_{{5}}\\
-427128111360\,{t_{{1}}}^{7}t_{{3}}{N}^{2}+67623045120\,{t_{{
1}}}^{7}t_{{3}}{N}^{4}-5233582080\,{t_{{1}}}^{7}{N}^{6}t_{{3}}\\
+198328320\,{N}^{8}{t_{{1}}}^{7}t_{{3}}-2949120\,{N}^{10}{t_{{1}}}^{7}t
_{{3}}+155381151744000\,{t_{{1}}}^{2}t_{{5}}t_{{3}}+4096\,{N}^{12}{t_{{1}}}^{10}\\
-5455392768000\,t_{
{1}}{t_{{3}}}^{3}{N}^{2}+190129766400\,t_{{1}}{t_{{3}}}^{3}{N}^{4}-
2477260800\,t_{{1}}{t_{{3}}}^{3}{N}^{6}\\
-8011461427200\,t_{{1}}t_{{9}}{
N}^{2}+208089907200\,t_{{1}}t_{{9}}{N}^{4}+76202709811200\,t_{{7}}t_{{
3}}\\
-5179952332800\,t_{{7}}t_{{3}}{N}^{2}+74317824000\,t_{{7}}t_{{3}}{N
}^{4}+38097174528000\,{t_{{5}}}^{2}\\
+76902226329600\,t_{{1}}t_{{9}}+
52344714240000\,{t_{{1}}}^{3}t_{{7}}+10730666419200\,{t_{{1}}}^{5}t_{{
5}}\\
+992397296\,{t_{{1}}}^{10}{N}^{4}-124813568\,{N}^{6}{t_{{1}}}^{10}+
1053904737600\,{t_{{1}}}^{7}t_{{3}}+8439552\,{N}^{8}{t_{{1}}}^{10}\\
-3984998904
\,{t_{{1}}}^{10}{N}^{2}+26499511584000\,{t_{{1}}}^{4}{t_{{3}}}^{2}+
51293581516800\,t_{{1}}{t_{{3}}}^{3}\\
\left.-2571396710400\,{t_{{5}}}^{2}{N}^{
2}+29727129600\,{t_{{5}}}^{2}{N}^{4}+6220965465\,{t_{{1}}}^{10}-292864\,{N}^{10}{t_{{1}}}^{10}
\right)
\ee

\section{Free energy of generalized BGW model}
\label{C}
\def\theequation{C\arabic{equation}}
\setcounter{equation}{0}

\be\label{F01}
\tilde{\mathcal F}_0^{(1)}=0
\ee

\be
\tilde{\mathcal F}_0^{(2)}=\frac{1}{2^3} T_{{3}}
\ee

\be
\tilde{\mathcal F}_0^{(3)}= \frac{1}{2^{5}\cdot 2}\left(2^2\cdot 3\,{T_{{3}}}^{2}+2^2\,T_{{5}}\right)
\ee

\be
\tilde{\mathcal F}_0^{(4)}=\frac{1}{2^{7}\cdot 3}\left(  3^{3} \cdot 5 \,T_{{5}}T_{{3}}+2^3\cdot 3^3\,{T_{{3}}}^{3}+3\cdot 5\,T_{{7}}  \right)
\ee

\be
\tilde{\mathcal F}_0^{(5)}=
\frac{1}{2^{9}\cdot 4}\left(2^5\cdot3^3\cdot 5\,{T_{{3}}}^{2}T_{{5}}+2^5\cdot 3\cdot7\,T_{{7}}T_{{3}}+2^4\cdot3^3\cdot 11\,{T_{{3}}}^{4}+2^3\cdot3^2\cdot 5
\,{T_{{5}}}^{2}+2^3\cdot7\,T_{{9}}
\right)
\ee

\be
\tilde{\mathcal F}_0^{(6)}=\frac{1}{2^{11}\cdot 5}\left(2\cdot3^2\cdot5^2\cdot7\,T_{{9}}T_{{3}}+2^2\cdot5^3\cdot 7\,T_{{7}}T_{{5}}+2^4\cdot3^2\cdot5^2\cdot 7\,{T_{{3}}}^{2}T_{{7}}\right.\\
\left.+
2^4\cdot3^3\cdot5^2\cdot13\,T_{{5}}{T_{{3}}}^{3}+2^3\cdot3^3\cdot5^3\,{T_{{5}}}^{2}T_{{3}}+2^4\cdot3^4\cdot 7\cdot13\,{T_{{
3}}}^{5}+2\cdot3\cdot5\cdot7\,T_{{11}}
\right)\\
\ee

\be
\tilde{\mathcal F}_0^{(7)}=\frac{1}{2^{13}\cdot 6}\left(2^6\cdot3^3\cdot5^2\cdot 7\,T_{{5}}T_{{3}}T_{{7}}+2^4\cdot3^6\cdot5^3\,{T_{{3}}}^{2}{T_{{5}}}^{2}+
2^4\cdot3^5\cdot5\cdot7\,{T_{{3}}}^{2}T_{{9}}\right.\\
+2^3\cdot3^4\cdot5^2\,T_{{9}}T_{{5}}+2^6\cdot3^4\cdot5^2\cdot7\,{T_{{3}}}^{
3}T_{{7}}+2^8\cdot3^6\cdot5^2\,T_{{5}}{T_{{3}}}^{4}+2^4\cdot3^4\cdot11\,T_{{11}}T_{{3}}\\
\left.+2^4\cdot3^3\cdot5^3\,
{T_{{5}}}^{3}+2^4\cdot3\cdot5^2\cdot7\,{T_{{7}}}^{2}+2^8\cdot3^6\cdot17\,{T_{{3}}}^{6}+2^3\cdot3^2\cdot11\,T_{{13}}
\right)\\
\ee

\be\label{F09}
\tilde{\mathcal F}_0^{(8)}=\frac{1}{2^{15}\cdot 7}\left(2^7\cdot3^7\cdot17\cdot19\,{T_{{3}}}^{
7}+2^4\cdot3^4\cdot5^2\cdot7^2\,T_{{5}}T_{{3}}T_{{9}}+2^5\cdot3^3\cdot5^2\cdot7^2\cdot17\,{T_{{3}}}^{2}T_{{7}}T_{{5}}\right.\\
+3^2\cdot7^2\cdot11\cdot13\,T_{{13}}T_{{3}}+2^5\cdot3^5\cdot5\cdot7^2\cdot17\,{T_{{3}}}^{4}T_{{7}}+2^4\cdot3^3\cdot5^3\cdot7\cdot17\,{T_{{5}
}}^{3}T_{{3}}\\
+3^3\cdot5\cdot7^2\cdot11\,T_{{11}}T_{{5}}+2^4\cdot3^4\cdot7^2\cdot11\,T_{{11}}{T_{{3}}}^{2}+
2^4\cdot3^2\cdot5^3\cdot7^2\,T_{{7}}{T_{{5}}}^{2}\\
+2^5\cdot3^6\cdot5^2\cdot7\cdot17\,{T_{{3}}}^{3}{T_{{5}}}^{2}+
2^5\cdot3\cdot5^2\cdot7^3\,{T_{{7}}}^{2}T_{{3}}+3^2\cdot5^2\cdot7^3\,T_{{9}}T_{{7}}\\
\left.+2^4\cdot3^4\cdot5\cdot7^2\cdot17\,{T_{{3}}}^
{3}T_{{9}}+2^5\cdot3^7\cdot7\cdot17\cdot19\,T_{{5}}{T_{{3}}}^{5}+
3\cdot7\cdot11\cdot13\,T_{{15}}
\right)
\ee

\be
\tilde{\mathcal F}_1^{(1)}={\frac {5}{16}}\,T_{{3}}
\ee

\be
\tilde{\mathcal F}_1^{(2)}={\frac {93}{64}}\,{T_{{3}}}^{2}+{\frac {35}{64}}\,T_{{5}}

\ee

\be
\tilde{\mathcal F}_1^{(3)}={\frac {75}{8}}\,{T_{{3}}}^{3}+{\frac {825}{128}}\,T_{{5}}T_{{3}}+{\frac {105}{128}}\,T_{{7}}
\ee

\be
\tilde{\mathcal F}_1^{(4)}={\frac {6225}{1024}}\,{T_{{5}}}^{2}+{\frac {35397}{512}}\,{T_{{3}}}^{4
}+{\frac {17415}{256}}\,{T_{{3}}}^{2}T_{{5}}+{\frac {3045}{256}}\,T_{{
7}}T_{{3}}+{\frac {1155}{1024}}\,T_{{9}}
\ee

\be
\tilde{\mathcal F}_1^{(5)}={\frac {140373}{256}}\,{T_{{3}}}^{5}+{\frac {40005}{2048}}\,T_{{9}}T_{
{3}}+{\frac {35595}{256}}\,{T_{{3}}}^{2}T_{{7}}+{\frac {20825}{1024}}
\,T_{{7}}T_{{5}}+{\frac {73125}{512}}\,{T_{{5}}}^{2}T_{{3}}\\
+{\frac {
178875}{256}}\,{T_{{3}}}^{3}T_{{5}}+{\frac {3003}{2048}}\,T_{{11}}
\ee

\be
\tilde{\mathcal F}_1^{(6)}=
{\frac {64925}{4096}}\,{T_{{7}}}^{2}+{\frac {1165509}{256}}\,{T_{{3}}}
^{6}+{\frac {373875}{4096}}\,{T_{{5}}}^{3}+{\frac {
1036665}{4096}}\,T_{{9}}{T_{{3}}}^{2}+{\frac {542325}{1024}}\,T_{{5}}T
_{{3}}T_{{7}}\\+{\frac {15015}{8192}}\,T_{{13}}
+{\frac {1552635}{1024}}\,T_
{{7}}{T_{{3}}}^{3}+{\frac {256725}{8192}}\,T_{{9}}T_{{5}}+{\frac {
121275}{4096}}\,T_{{11}}T_{{3}}+{\frac {9605925}{4096}}\,{T_{{5}}}^{2}
{T_{{3}}}^{2}\\
+{\frac {1819665}{256}}\,{T_{{3}}}^{4}T_{{5}}
\ee

\be
\tilde{\mathcal F}_1^{(7)}={\frac {69903081}{1792}}\,{T_{{3}}}^{7}+{\frac {513135}{32}}\,T_{{7}}{
T_{{3}}}^{4}+{\frac {921375}{1024}}\,T
_{{5}}T_{{3}}T_{{9}}+{\frac {9520875}{1024}}\,T_{{5}}{T_{{3}}}^{2}T_{{
7}}\\
+{\frac {466725}{1024}}\,{T_{{7}}}^{2}T_{{3}}+{\frac {
6041385}{2048}}\,T_{{9}}{T_{{3}}}^{3}+{\frac {693693}{16384}}\,T_{{13}
}T_{{3}}+{\frac {744975}{16384}}\,T_{{11}}T_{{5}}+{\frac {866943}{2048
}}\,{T_{{3}}}^{2}T_{{11}}\\
+{\frac {33969375}{1024}}\,{T_{{5}}}^{2}{T_{{
3}}}^{3}+{\frac {9218205}{128}}\,{T_{{3}}}^{5}T_{{5}}+{\frac {3290625}
{1024}}\,{T_{{5}}}^{3}T_{{3}}+{\frac {760725}{16384}}\,T_{{9}}T_{{7}}\\
+{\frac {485625}{1024}}\,{T_{{5}}}^{2}T_{{7}}+{\frac {36465}{16384}}\,T_{{15}}
\ee

\be
\tilde{\mathcal F}_2^{(1)}={\frac {259}{256}}\,T_{{5}}+{\frac {657}{256}}\,{T_{{3}}}^{2}
\ee

\be
\tilde{\mathcal F}_2^{(2)}={\frac {6201}{128}}\,{T_{{3}}}^{3}+{\frac {36015}{1024}}\,T_{{5}}T_{{3
}}+{\frac {4935}{1024}}\,T_{{7}}
\ee

\be
\tilde{\mathcal F}_2^{(3)}={\frac {74529}{512}}\,T_{{7}}T_{{3}}+{\frac {397035}{512}}\,{T_{{3}}}^
{2}T_{{5}}+{\frac {765693}{1024}}\,{T_{{3}}}^{4}+{\frac {30723}{2048}}
\,T_{{9}}+{\frac {149985}{2048}}\,{T_{{5}}}^{2}
\ee

\be
\tilde{\mathcal F}_2^{(4)}={\frac {7390845}{16384}}\,T_{{9}}T_{{3}}+{\frac {28598265}{2048}}\,{T_
{{3}}}^{3}T_{{5}}+{\frac {6086745}{2048}}\,{T_{{3}}}^{2}T_{{7}}+{
\frac {12284325}{4096}}\,{T_{{5}}}^{2}T_{{3}}\\
+{\frac {3744825}{8192}}
\,T_{{7}}T_{{5}}
+{\frac {106851717}{10240}}\,{T_{{3}}}^{5}+{\frac {
603603}{16384}}\,T_{{11}}
\ee

\be
\tilde{\mathcal F}_2^{(5)}={\frac {9707635}{16384}}\,{T_{{7}}}^{2}+{\frac {51448425}{16384}}\,{T_
{{5}}}^{3}+{\frac {140271507}{1024}}\,{T_{{3}}}^{6}\\
+{\frac {229158315}
{1024}}\,{T_{{3}}}^{4}T_{{5}}+{\frac {1265323275}{16384}}\,{T_{{5}}}^{
2}{T_{{3}}}^{2}+{\frac {38701215}{32768}}\,T_{{9}}T_{{5}}+{\frac {
149588775}{16384}}\,T_{{9}}{T_{{3}}}^{2}\\
+{\frac {208346985}{4096}}\,T_
{{7}}{T_{{3}}}^{3}+{\frac {18927909}{16384}}\,T_{{11}}T_{{3}}+{\frac {
76102635}{4096}}\,T_{{5}}T_{{3}}T_{{7}}+{\frac {2543541}{32768}}\,T_{{
13}}
\ee

\be
\tilde{\mathcal F}_2^{(6)}={\frac {441784935}{256}}\,{T_{{3}}}^{7}+{\frac {3217790205}{4096}}\,T_
{{7}}{T_{{3}}}^{4}+{\frac {50498175}{2048}}\,{T_{{7}}}^{2}T_{{3}}+{
\frac {634027905}{4096}}\,T_{{9}}{T_{{3}}}^{3}\\
+{\frac {338585247}{
131072}}\,T_{{13}}T_{{3}}+{\frac {349538805}{131072}}\,T_{{11}}T_{{5}}
+{\frac {97956243}{4096}}\,{T_{{3}}}^{2}T_{{11}}+{\frac {3266594325}{
2048}}\,{T_{{5}}}^{2}{T_{{3}}}^{3}\\
+{\frac {13610282013}{4096}}\,{T_{{3
}}}^{5}T_{{5}}+{\frac {1317331125}{8192}}\,{T_{{5}}}^{3}T_{{3}}+{
\frac {351871695}{131072}}\,T_{{9}}T_{{7}}+{\frac {205991625}{8192}}\,
{T_{{5}}}^{2}T_{{7}}\\
+{\frac {402291225}{8192}}\,T_{{5}}T_{{3}}T_{{9}}+
{\frac {971414325}{2048}}\,T_{{5}}{T_{{3}}}^{2}T_{{7}}+{\frac {
19246227}{131072}}\,T_{{15}}
\ee

\be
\tilde{\mathcal F}_3^{(1)}=75\,{T_{{3}}}^{3}+{\frac {114225}{2048}}\,T_{{5}}T_{{3}}+{\frac {16145
}{2048}}\,T_{{7}}
\ee

\be
\tilde{\mathcal F}_3^{(2)}={\frac {1399965}{2048}}\,T_{{7}}T_{{3}}+{\frac {7170255}{2048}}\,{T_{{
3}}}^{2}T_{{5}}+{\frac {13407093}{4096}}\,{T_{{3}}}^{4}+{\frac {604835
}{8192}}\,T_{{9}}+{\frac {2804625}{8192}}\,{T_{{5}}}^{2}
\ee

\be
\tilde{\mathcal F}_3^{(3)}={\frac {75131595}{16384}}\,T_{{9}}T_{{3}}+{\frac {265649625}{2048}}\,{
T_{{3}}}^{3}T_{{5}}+{\frac {58951305}{2048}}\,{T_{{3}}}^{2}T_{{7}}+{
\frac {118222875}{4096}}\,{T_{{5}}}^{2}T_{{3}}\\
+{\frac {37711975}{8192}
}\,T_{{7}}T_{{5}}+{\frac {192117987}{2048}}\,{T_{{3}}}^{5}+{\frac {
6483477}{16384}}\,T_{{11}}
\ee

\be
\tilde{\mathcal F}_3^{(4)}={\frac {711506425}{65536}}\,{T_{{7}}}^{2}+{\frac {5317009425}{16384}}\,T_{{5}}T_{{3}}T_{{7}}+{
\frac {201052995}{131072}}\,T_{{13}}\\
+{\frac {3562329375}{65536}}
\,{T_{{5}}}^{3}+{\frac {8767078173}{4096}}\,{T_{{3}}}^{6}+{\frac {
14797367145}{4096}}\,{T_{{3}}}^{4}T_{{5}}+{\frac {84554265825}{65536}}
\,{T_{{5}}}^{2}{T_{{3}}}^{2}\\
+{\frac {2843426025}{131072}}\,T_{{9}}T_{{
5}}+{\frac {10571621445}{65536}}\,T_{{9}}{T_{{3}}}^{2}+{\frac {
14033060055}{16384}}\,T_{{7}}{T_{{3}}}^{3}+{\frac {1411098975}{65536}}
\,T_{{11}}T_{{3}}
\ee

\be
\tilde{\mathcal F}_3^{(5)}=
{\frac {1211928290217}{28672}}\,{T_{{3}}}^{7}+{\frac {84821501835}{
4096}}\,T_{{7}}{T_{{3}}}^{4}
+{\frac {11533969125}{16384}}\,{T_{{7}}}^{
2}T_{{3}}\\
+{\frac {140161531545}{32768}}\,T_{{9}}{T_{{3}}}^{3}+{\frac {
20835465651}{262144}}\,T_{{13}}T_{{3}}+{\frac {21099663585}{262144}}\,
T_{{11}}T_{{5}}\\
+{\frac {22803571791}{32768}}\,{T_{{3}}}^{2}T_{{11}}+{
\frac {682677669375}{16384}}\,{T_{{5}}}^{2}{T_{{3}}}^{3}+{\frac {
344109377925}{4096}}\,{T_{{3}}}^{5}T_{{5}}\\
+{\frac {71155771875}{16384}
}\,{T_{{5}}}^{3}T_{{3}}+{\frac {21140849835}{262144}}\,T_{{9}}T_{{7}}+
{\frac {11630143875}{16384}}\,{T_{{5}}}^{2}T_{{7}}\\
+{\frac {23039048025
}{16384}}\,T_{{5}}T_{{3}}T_{{9}}+{\frac {211982823675}{16384}}\,T_{{5}
}{T_{{3}}}^{2}T_{{7}}+{\frac {1256693295}{262144}}\,T_{{15}}
\ee

\be
\tilde{\mathcal F}_4^{(1)}={\frac {16776921}{16384}}\,T_{{7}}T_{{3}}+{\frac {84428595}{16384}}\,{
T_{{3}}}^{2}T_{{5}}
+{\frac {155619117}{32768}}\,{T_{{3}}}^{4}+{\frac {
7400547}{65536}}\,T_{{9}}+{\frac {33567585}{65536}}\,{T_{{5}}}^{2}
\ee

\be
\tilde{\mathcal F}_4^{(2)}={\frac {5177752965}{262144}}\,T_{{9}}T_{{3}}+{\frac {61659550053}{163840}}\,{T
_{{3}}}^{5}+{\frac {461311851}{262144}}\,T_{{11}}\\
+{\frac {17389528605}{
32768}}\,{T_{{3}}}^{3}T_{{5}}+{\frac {3952037565}{32768}}\,{T_{{3}}}^{
2}T_{{7}}+{\frac {7910056125}{65536}}\,{T_{{5}}}^{2}T_{{3}}+{\frac {
2591640625}{131072}}\,T_{{7}}T_{{5}}
\ee

\be
\tilde{\mathcal F}_4^{(3)}={\frac {25761027005}{262144}}\,{T_{{7}}}^{2}+{\frac {124408920975}{
262144}}\,{T_{{5}}}^{3}\\
+{\frac {284951872593}{16384}}\,{T_{{3}}}^{6}+{
\frac {491651138745}{16384}}\,{T_{{3}}}^{4}T_{{5}}+{\frac {
2877392953125}{262144}}\,{T_{{5}}}^{2}{T_{{3}}}^{2}\\
+{\frac {
103019276745}{524288}}\,T_{{9}}T_{{5}}+{\frac {371988099945}{262144}}
\,T_{{9}}{T_{{3}}}^{2}+{\frac {478941497055}{65536}}\,T_{{7}}{T_{{3}}}
^{3}\\
+{\frac {51393633291}{262144}}\,T_{{11}}T_{{3}}+{\frac {
186325900005}{65536}}\,T_{{5}}T_{{3}}T_{{7}}+{\frac {7612223619}{
524288}}\,T_{{13}}

\ee

\be
\tilde{\mathcal F}_4^{(4)}=
{\frac {19375419429891}{32768}}\,{T_{{3}}}^{7}+{\frac {39967361286615}
{131072}}\,T_{{7}}{T_{{3}}}^{4}+{\frac {1441907245275}{131072}}\,{T_{{
7}}}^{2}T_{{3}}\\
+{\frac {17044272862155}{262144}}\,T_{{9}}{T_{{3}}}^{3}
+{\frac {5457509668611}{4194304}}\,T_{{13}}T_{{3}}+{\frac {
5481470419425}{4194304}}\,T_{{11}}T_{{5}}\\
+{\frac {2872551695013}{
262144}}\,{T_{{3}}}^{2}T_{{11}}+{\frac {80106396840225}{131072}}\,{T_{
{5}}}^{2}{T_{{3}}}^{3}+{\frac {157614682018959}{131072}}\,{T_{{3}}}^{5
}T_{{5}}\\
+{\frac {17136214183125}{262144}}\,{T_{{5}}}^{3}T_{{3}}+{
\frac {5484436521795}{4194304}}\,T_{{9}}T_{{7}}+{\frac {2892529880625}
{262144}}\,{T_{{5}}}^{2}T_{{7}}\\
+{\frac {5765593974525}{262144}}\,T_{{5
}}T_{{3}}T_{{9}}+{\frac {25639814092725}{131072}}\,T_{{5}}{T_{{3}}}^{2
}T_{{7}}+{\frac {343519458711}{4194304}}\,T_{{15}}
\ee

\be
\tilde{\mathcal F}_5^{(1)}={\frac {14960246805}{524288}}\,T_{{9}}T_{{3}}+{\frac {49062715875}{
65536}}\,{T_{{3}}}^{3}T_{{5}}+{\frac {11274351795}{65536}}\,{T_{{3}}}^
{2}T_{{7}}\\
+{\frac {22552975125}{131072}}\,{T_{{5}}}^{2}T_{{3}}+{\frac 
{7482105225}{262144}}\,T_{{7}}T_{{5}}+{\frac {34468789653}{65536}}\,{T
_{{3}}}^{5}+{\frac {1352576043}{524288}}\,T_{{11}}
\ee

\be
\tilde{\mathcal F}_5^{(2)}={\frac {418486752775}{1048576}}\,{T_{{7}}}^{2}+{\frac {1979457545625}{
1048576}}\,{T_{{5}}}^{3}+{\frac {4336231722327}{65536}}\,{T_{{3}}}^{6}\\
+{\frac {7582592016315}{65536}}\,{T_{{3}}}^{4}T_{{5}}+{\frac {
45038717337375}{1048576}}\,{T_{{5}}}^{2}{T_{{3}}}^{2}+{\frac {
1673842629375}{2097152}}\,T_{{9}}T_{{5}}\\
+{\frac {5932794319515}{
1048576}}\,T_{{9}}{T_{{3}}}^{2}+{\frac {7503673495185}{262144}}\,T_{{7
}}{T_{{3}}}^{3}+{\frac {836393983425}{1048576}}\,T_{{11}}T_{{3}}\\
+{
\frac {2967884007375}{262144}}\,T_{{5}}T_{{3}}T_{{7}}+{\frac {
126762200565}{2097152}}\,T_{{13}}
\ee

\be
\tilde{\mathcal F}_5^{(3)}={\frac {295477654037589}{65536}}\,{T_{{3}}}^{7}+{\frac {78832872717795
}{32768}}\,T_{{7}}{T_{{3}}}^{4}+{\frac {23702887069575}{262144}}\,{T_{
{7}}}^{2}T_{{3}}\\
+{\frac {274556802988755}{524288}}\,T_{{9}}{T_{{3}}}^{
3}+{\frac {46148268185019}{4194304}}\,T_{{13}}T_{{3}}+{\frac {
46210569510105}{4194304}}\,T_{{11}}T_{{5}}\\
+{\frac {47349163650789}{
524288}}\,{T_{{3}}}^{2}T_{{11}}+{\frac {1262180334313125}{262144}}\,{T
_{{5}}}^{2}{T_{{3}}}^{3}+{\frac {152599213285155}{16384}}\,{T_{{3}}}^{
5}T_{{5}}\\
+{\frac {137510048829375}{262144}}\,{T_{{5}}}^{3}T_{{3}}+{
\frac {46217234012355}{4194304}}\,T_{{9}}T_{{7}}+{\frac {
23725543608375}{262144}}\,{T_{{5}}}^{2}T_{{7}}\\
+{\frac {47401263300825}
{262144}}\,T_{{5}}T_{{3}}T_{{9}}+{\frac {412202168102025}{262144}}\,T_
{{5}}{T_{{3}}}^{2}T_{{7}}+{\frac {2989207836615}{4194304}}\,T_{{15}}
\ee

\be
\tilde{\mathcal F}_6^{(1)}={\frac {2356605625185}{4194304}}\,{T_{{7}}}^{2}+{\frac {11038896821475
}{4194304}}\,{T_{{5}}}^{3}+{\frac {23660311883769}{262144}}\,{T_{{3}}}
^{6}\\
+{\frac {41645338352865}{262144}}\,{T_{{3}}}^{4}T_{{5}}+{\frac {
249159522789225}{4194304}}\,{T_{{5}}}^{2}{T_{{3}}}^{2}+{\frac {
9426278461365}{8388608}}\,T_{{9}}T_{{5}}\\
+{\frac {33108811628445}{
4194304}}\,T_{{9}}{T_{{3}}}^{2}+{\frac {41522675471235}{1048576}}\,T_{
{7}}{T_{{3}}}^{3}+{\frac {4712392198503}{4194304}}\,T_{{11}}T_{{3}}\\
+{
\frac {16556502355785}{1048576}}\,T_{{5}}T_{{3}}T_{{7}}+{\frac {
721976952807}{8388608}}\,T_{{13}}

\ee

\be
\tilde{\mathcal F}_6^{(2)}={\frac {15185336065870311}{917504}}\,{T_{{3}}}^{7}+{\frac {
9454513725058185}{1048576}}\,T_{{7}}{T_{{3}}}^{4}+{\frac {
11387713000125}{32768}}\,{T_{{7}}}^{2}T_{{3}}\\
+{\frac {1041752546225055
}{524288}}\,T_{{9}}{T_{{3}}}^{3}+{\frac {1442169677387439}{33554432}}
\,T_{{13}}T_{{3}}+{\frac {1442721973704165}{33554432}}\,T_{{11}}T_{{5}
}\\
+{\frac {182141818558623}{524288}}\,{T_{{3}}}^{2}T_{{11}}+{\frac {
1182045374663625}{65536}}\,{T_{{5}}}^{2}{T_{{3}}}^{3}+{\frac {
36199921141511001}{1048576}}\,{T_{{3}}}^{5}T_{{5}}\\
+{\frac {
4169029399318125}{2097152}}\,{T_{{5}}}^{3}T_{{3}}+{\frac {
1442776147146495}{33554432}}\,T_{{9}}T_{{7}}+{\frac {729014470760625}{
2097152}}\,{T_{{5}}}^{2}T_{{7}}\\
+{\frac {1457590887545625}{2097152}}\,T
_{{5}}T_{{3}}T_{{9}}+{\frac {781513453641375}{131072}}\,T_{{5}}{T_{{3}
}}^{2}T_{{7}}+{\frac {95041284259779}{33554432}}\,T_{{15}}
\ee

\be
\tilde{\mathcal F}_7^{(1)}={\frac {23275794518914797}{1048576}}\,{T_{{3}}}^{7}+{\frac {
12810322271682495}{1048576}}\,T_{{7}}{T_{{3}}}^{4}+{\frac {
1999590270994575}{4194304}}\,{T_{{7}}}^{2}T_{{3}}\\
+{\frac {
22719742197138315}{8388608}}\,T_{{9}}{T_{{3}}}^{3}+{\frac {
3987196321745637}{67108864}}\,T_{{13}}T_{{3}}+{\frac {3987584386351335
}{67108864}}\,T_{{11}}T_{{5}}\\
+{\frac {3998836418130477}{8388608}}\,{T_
{{3}}}^{2}T_{{11}}+{\frac {102487735548028125}{4194304}}\,{T_{{5}}}^{2
}{T_{{3}}}^{3}+{\frac {48784570534530045}{1048576}}\,{T_{{3}}}^{5}T_{{
5}}\\
+{\frac {11361282083218125}{4194304}}\,{T_{{5}}}^{3}T_{{3}}+{\frac 
{3987621333347085}{67108864}}\,T_{{9}}T_{{7}}+{\frac {1999731386512125
}{4194304}}\,{T_{{5}}}^{2}T_{{7}}\\
+{\frac {3999155802479775}{4194304}}
\,T_{{5}}T_{{3}}T_{{9}}+{\frac {34081835456653425}{4194304}}\,T_{{5}}{
T_{{3}}}^{2}T_{{7}}+{\frac {264952094603625}{67108864}}\,T_{{15}}
\ee

\section{Free energy of generalized BGW model as a linear combination of $B_k(N)$}
\label{D}
\def\theequation{D\arabic{equation}}
\setcounter{equation}{0}

\be
{\mathcal F}_{N}^{(2)}=\frac{1}{2^7} B_2(N) T_3
\ee

\be
{\mathcal F}_{N}^{(3)}={\frac {1}{2^{10}}}\, B_3(N) \left( T_5+3 T_3^2\right) -\frac{3}{2^{8}}B_2(N) T_3^2
\ee

\be
{\mathcal F}_{N}^{(4)}={\frac {1}{2^{15}}}\,B_4(N)  \left(5 T_7 +3^2\cdot5  T_3T_5 +2^3\cdot3^2 T_3^3\right)
-\frac{3}{2^{10}}B_3(N)\left( 5T_3T_5+13 T_3^3\right)
+ \frac{3}{2^7}B_2(N) T_3^3
\ee

\be
{\mathcal F}_{N}^{(5)}={\frac {1}{2^{18}}}\,  B_5(N) \left(7\,T_{{9}}+3^2\cdot 5 \,{T_{{5}}}^{2}+2^2\cdot 3^3\cdot 5 \,{T_{{3}}}^{2}
T_{{5}}+2^2\cdot 3\cdot 7 \,T_{{7}}T_{{3}}+2\cdot 3^3\cdot 11\,{T_{{3}}}^{4}
\right)\\
-\frac{3}{2^{14}}B_4(N)\left(5^2\,{T_{{5}}}^{2}+5\cdot 7\,T_{{7}}T_{{3}}+3^4\cdot 5\,{T_{{3}}}^{2}T_{{5}}+3^4\cdot 7\,{T
_{{3}}}^{4}
\right)\\
+\frac{3^2}{2^{11}}B_3(N)\left(5T_5^2+3^2\cdot 13 T_3^4+2^2\cdot 3\cdot5 T_3^2 T_5\right)
-\frac{3^3}{2^9}B_2(N)T_3^4
\ee

\be
{\mathcal F}_{N}^{(6)}={\frac {1}{2^{22}\cdot 5}}\,B_6(N) \left(3\cdot 5\cdot 7\, T_{11}+
3^2\cdot5^2\cdot7\,T_{{9}}T_{{3}}+2^3\cdot3^4\cdot7\cdot13\,{T_{{3}}}^{5}\right.\\
\left.+2^3\cdot3^2\cdot5^2\cdot7\,{T_{{3}}}^{2}T_{{7}}+
2^2\cdot3^3\cdot5^3\,{T_{{5}}}^{2}T_{{3}}+2\cdot5^3\cdot7\,T_{{7}}T_{{5}}+2^3\cdot3^3\cdot5^2\cdot13\,
T_{{5}}{T_{{3}}}^{3}
\right)\\
-\frac{3}{2^{17}\cdot 5}B_5(N)\left(5^3\cdot 7\,T_{{7}}T_{{5}}+3^2\cdot5^3\cdot 7\,{T_{{3}}}^{2}T_{{7}}+3\cdot5^2\cdot7\,T_{{9}}T_{{3}}+
3^3\cdot5^2\cdot 89\,T_{{5}}{T_{{3}}}^{3}\right.\\
\left.+2\cdot3\cdot5^3\cdot13\,{T_{{5}}}^{2}T_{{3}}+2\cdot3^3\cdot1087\,{T_{{3}}
}^{5}
\right)
+\frac{3^2}{2^{15}\cdot 5}B_4(N)\left(5^3\cdot 7\,T_{{7}}T_{{5}}+2\cdot3\cdot5^3\cdot7\,{T_{{3}}}^{2}T_{{7}}\right.\\
\left.+2^2\cdot3^3\cdot5^4\,T_{{5}}{T_{{3}}}
^{3}+2^2\cdot3\cdot5^3\cdot7\,{T_{{5}}}^{2}T_{{3}}+3^4\cdot31^2\,{T_{{3}}}^{5}
\right)
-\frac{3^3}{2^{10}\cdot 5}B_3(N) \left(5^3\cdot 7 T_3^2 T_5\right.\\
\left.+5^3 T_5^2+2\cdot 3\cdot 229 T_3^5\right)T_3
+\frac{3^4}{2^7\cdot 5}\, B_2(N)T_3^5
\ee


\end{appendices}

\end{document}